\documentclass[9pt,letterpaper]{amsart}
\usepackage{amssymb}
\usepackage[dvips]{epsfig}
\usepackage{graphicx}
\usepackage{color}
\usepackage{amsmath}
\usepackage{amssymb}
\usepackage{amsfonts}
\usepackage{amsthm}
\usepackage{bbm}
\usepackage{tikz}
\usepackage[colorlinks=true,linkcolor=blue,citecolor=brown]{hyperref}

\usepackage{bm}
\usepackage{stmaryrd}

\newtheorem{thm}{Theorem}[section]
\newtheorem{lem}[thm]{Lemma}

\newtheorem{cor}[thm]{Corollary}
\theoremstyle{remark}
\newtheorem{rem}{\bf Remark}[section]
\theoremstyle{definition}
\newtheorem{defn}[thm]{Definition}

\numberwithin{equation}{section}
\newtheorem{ass}{Assumption}[section]

\begin{document}
\title[The existence of full dimensional KAM tori]{The existence of full dimensional KAM tori for infinite dimensional Hamiltonian systems with long range interactions}

\author[H. Cong]{Hongzi Cong}
\address[H. Cong]{School of Mathematical Sciences, Dalian University of Technology, Dalian 116024, China}
\email{conghongzi@dlut.edu.cn}

\author[W. Ding]{Wanran Ding}
\address[W. Ding]{School of Mathematical Sciences, Dalian University of Technology, Dalian 116024, China}
\email{15998408006@163.com}

\author[Z. Zhang]{Zhihan Zhang}
\address[Z. Zhang]{School of Mathematical Sciences, Dalian University of Technology, Dalian 116024, China}
\email{dlutzzh@163.com}

\author[J. Zhao]{Juan Zhao}
\address[J. Zhao ] {School of Mathematical Sciences, Xinjiang Normal University, Urumqi 830017,China}
\email{ksc@mail.dlut.edu.cn}

\thanks{Supported by NNSFC (No.12571195, No.12371241) and NSFLP (No.2025-MS-002).}

\keywords{Full dimensional tori; KAM Theory; Long range interaction}


\begin{abstract}
	We prove the existence of full dimensional KAM tori for infinite dimensional mechanical systems exhibiting long range interactions, under a Diophantine condition of Bourgain type \cite{Bour2005}, in which the radius of the invariant tori satisfies a slower decay.
\end{abstract}

\maketitle

\section{Introduction}\label{Int}

The persistence of full-dimensional invariant tori is a classical topic in KAM theory. Since the 1990s, KAM theory has been extended to infinite-dimensional settings, notably by Fr\"ohlich-Spencer-Wayne \cite{FSW1986} and P\"oschel \cite{P1990}. 
These foundational works, however, require the perturbation to exhibit sufficiently rapid spatial decay, commonly referred to as short-range interactions.

Very recently, Dolgopyat-Fayad-Paradela \cite{DFP2026}  constructed full-dimensional KAM tori for infinite-dimensional mechanical systems with long-range interactions. 
In their work, the existence of such invariant tori is established under the exponential decay condition
\begin{align*}
	I_{j} \sim e^{-\kappa j},\quad\text{with}\quad j\in\mathbb{N}, \
	\kappa>1,
\end{align*}
on the radius of the tori.

Motivated by an open problem raised by Kuksin \cite[Problem 7.1]{Kuk2004} concerning the existence of KAM tori with polynomial decay 
\begin{align*}
	I_{j} \sim |j|^{-C}, \quad \text{as } |j| \to +\infty,
\end{align*}
with some $C>0$
in the context of Hamiltonian partial differential equations, 
we investigate whether such slowly decaying invariant tori can be constructed 
for infinite-dimensional systems with long-range interactions.

\subsection{The model.}
	
	To be specific, as in \cite{DFP2026}, we consider the infinite dimensional mechanical systems with the Hamiltonian 
\begin{equation}\label{model}
	H(I,\theta)
	=
	\sum_{j\in\mathbb N} 
	\omega_{j} I_{j}
	+
	P(I,\theta),
	\quad 
	(I_{j},\theta_{j})_{j\in\mathbb N}
	\in \mathbb R^{\mathbb N} \times \mathbb T^{\mathbb N},
\end{equation}
where $\bm \omega = \left( \omega_{j} \right)_{j \in \mathbb{N}} \in  [a,b]^{\mathbb N}$ with $b>a>0$ and
\begin{align*}
	P = \sum_{i<j} P_{i,j} (\theta_i,\theta_j,I_i,I_j),
\end{align*}
which comprises long range (all-to-all) interactions.
To state our result, we first introduce some notations.

\subsection{Functional setting.}

Fix $\sigma>2$ and $w>0$. For $j \in \mathbb{N}$, we set $\lfloor j \rfloor = \max \left\{ e, |j| \right\}$.
For $\nu>0$ and a sequence $Y=(Y_{j})_{j \in \mathbb N}$, define the norms
\begin{equation}\label{eq:nu-norm}
	|Y|_\nu := \sum_{j \in \mathbb N} |Y_{j}| e^{\nu\ln^\sigma\lfloor j\rfloor},\quad \left|Y\right|_{\infty}:=	\sup_{ j \in \mathbb{N}} |Y_{j}|.
\end{equation}
The phase space is the complex Banach space
\begin{equation}\label{PhaseSpace}
	\mathfrak{H}_{\sigma,w,\infty} = \left\{
	(I,\theta) = (I_{j},\theta_{j})_{j\in\mathbb{N}}
	:
	\left|I\right|_w  < \infty,\;
\left|\Im \theta\right|_{\infty} < \infty
	\right\},
\end{equation}
with $\Re\theta_{j}\in\mathbb{T}$. 
Define
$$
J_{j}:=I_{j}-I_{j}{(0)},
\qquad
J=(J_{j})_{j\in\mathbb N}.
$$
For $r,h>0$, we define the domain
\begin{equation}\label{eq:J-localization}
	D(r,h) = \{(J,\theta)\in\mathfrak{H}_{\sigma,w,\infty} : 	\left|J\right|_{w} < r,\; \left|\Im \theta\right|_{\infty} < h\}.
\end{equation}
Furthermore, ignoring the constant terms, \eqref{model} can be rewritten as
\begin{align}\label{H}
	H = 
	\sum_{j \in \mathbb{N}}
	\omega_{j} J_{j} 
	+R(J,\theta).
\end{align}

\subsection{Weighted norm}

A Hamiltonian $R$ on $D(r,h)$ is expanded in Fourier-Taylor series around $I{(0)}$:
\begin{equation}\label{GeneralR}
	R(J,\theta) = \sum_{\bm n} R_{\bm n}(J,\theta),
\end{equation}
with
\begin{equation*}
\qquad
R_{\bm n}(J,\theta) = R(\bm n)\, J^{\bm\alpha} I(0)^{\bm\beta} e^{\mathrm{i}\langle \bm k,\theta\rangle},
\end{equation*}
where $\bm n = (\bm\alpha,\bm\beta,\bm k)\in\mathbb{N}^{\mathbb{N}}\times\mathbb{N}^{\mathbb{N}}\times\mathbb{Z}^{\mathbb{N}}$ and
$$
J^{\bm \alpha}:=\prod_{j\in\mathbb N} J_{j}^{\alpha_{j}},
\qquad
I{(0)}^{\bm \beta}:=\prod_{j\in\mathbb N} I_{j}{(0)}^{\beta_{j}},
\qquad
\langle \bm k,\theta\rangle:=\sum_{j\in\mathbb N} k_{j}\theta_{j}.
$$
For such an index $\bm n = (\bm\alpha,\bm\beta,\bm k)$, define
$$	\operatorname{supp} \bm n = \left\{ j \in \mathbb{N}
: 
\alpha_{j} + \beta_{j} + \left| k_{j} \right| \neq 0  \right\},$$
and
$$
	\overline{\bm n} = \max \left\{ j \in \operatorname{supp} \bm n \right\},
\quad
	\underline{\bm n} = \operatorname{supp} \bm n \setminus \{ \overline{\bm n}  \},
\quad
|\bm k|=\sum_{j\in\mathbb N}|k_{j}|.
$$
Similarly, for a set $A \subset \mathbb{N}$, we define
\begin{align*}
	\overline{A} = \max A
	\quad
	\text{and}
	\quad
	\underline{A} = A \setminus \{ \overline{A} \}.
\end{align*}
Inspired by \cite{DFP2026}, we define the {\it weighted norm} of the Hamiltonians as follows.
\begin{defn}\label{def:norm-plus}
	Fix $\sigma>2$ and $w>0$.
	For $\iota \in (0,1)$ and \(r,h>0\), we define
\begin{align}\label{norm}
	\interleave R \interleave_{\iota,r,h}
	=
	\sum_{j_0 \in \mathbb{N}}
	e^{w \ln^{\sigma}\lfloor j_0 \rfloor}
	\sum_{\overline{\bm n} = j_0}
	\left(
	\prod_{j \in \underline{\bm n}} 
	e^{\iota w \ln^{\sigma}\lfloor j \rfloor}
	\right)
	\left\| R_{\bm n} \right\|_{r,h}
	,
\end{align}
	where
	\begin{align*}
	\left\| R_{\bm n} \right\|_{r,h}
	=
	\left| R(\bm n) \right|
	r^{|\bm \alpha| + |\bm \beta|}
	e^{|\bm k|h}.
\end{align*}
\end{defn}

\subsection{Main result}
Before stating the main theorem, we introduce the assumption.

\begin{ass}[Diophantine]\label{ass:diophantine}
	Fix $0<d<1$.
	We assume that $\bm \omega \in [a,b]^{\mathbb N}$ is Diophantine, i.e., there exists $\eta>0$ such that
	\begin{equation}\label{eq:Diophantine}
		|\langle \bm k,\bm \omega\rangle|
		\ge
		\eta
		\prod_{j \in \underline{\bm k} }
		\left(
		\frac{1}{ 1 + k_j^2 \lfloor j \rfloor^{4} }
		\right)^{ C_0(d) }		
		,
	\end{equation}
	where
	$C_0(d) \ge \frac{4d^2 + 8d + 1}{4d(1-d)}$.
\end{ass}

\begin{thm}\label{thm:main}
	Fix $\sigma > 2$, $w,r,h>0$, and $0 < \iota < 1$. 
	Let $H$ be the Hamiltonian \eqref{H} on $D(r,h)$ with $\bm \omega = \widehat{\bm \omega}$ satisfying Assumption~\ref{ass:diophantine}. 
	Then there exists
	$$
	\epsilon_*=\epsilon_*(\sigma,w,r,h,\iota,\eta)>0
	$$
	such that, for any $0<\epsilon<\epsilon_*$ and  real-analytic $R$ on $D(r,h)$
	satisfying
	\begin{equation}\label{eq:R-small}
	\interleave R \interleave_{\iota,r,h} \le \epsilon,
	\end{equation}
	there exists a real-analytic symplectic map
$$
	\Phi: D(r/2,h/2)\to D(r,h),
$$
	close to the identity, for which
	\begin{equation}\label{eq:normal-form}
		H\circ \Phi
		=
		N_*+ R_*,
	\end{equation}
	where
$$
	N_*(J)
	=
	\sum_{j \in \mathbb N} \widehat\omega_{j} J_{j},
$$
and $R_*$ is of the form
	\begin{equation*}
	{R}_*(J,\theta)=
	\sum_{\substack{\bm n \in \mathbb{N}^{\mathbb{N}} \times  \mathbb{N}^{\mathbb{N}} \times  \mathbb{Z}^{\mathbb{N}}\\
			|\bm \alpha| \ge 2}}
	R_*(\bm n) J^{\bm \alpha} I(0)^{\bm \beta} e^{\mathrm{i}  \langle \bm k, \theta \rangle}.
\end{equation*}
	Moreover,
\begin{equation}\label{1.9}
	\interleave R_{*} \interleave_{\iota/2,r/2,h/2} \leq 2\epsilon.
\end{equation}
\end{thm}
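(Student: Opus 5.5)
The plan is to run a Newton-type KAM iteration with no frequency modulation: the normal form stays $N_*=\sum_j\widehat\omega_jJ_j$ throughout. Since we only eliminate the parts of $R$ that are of degree $\le 1$ in $J$ (the terms with $|\bm\alpha|=0$ and $|\bm\alpha|=1$), the $\theta$-averaged degree-one terms $\sum_j \langle R\rangle_j J_j$ are not removed but absorbed. Because $I(0)$ is fixed and the torus is not moved, the right way to absorb them is a translation $J\mapsto J+\xi$ in the action direction. Equivalently, we kill the full degree-$\le1$ part via the homological equation
\begin{equation*}
\{N_*,F\}+R^{\le 1}-[R^{\le1}]=0,
\end{equation*}
and let the constant and averaged linear parts be handled by adjusting the base point (a counterterm-free formulation: the average of the $J$-linear part vanishes after the shift). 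The mean-zero $\theta$-dependent parts are solved by $F_{\bm n}=R_{\bm n}/(\mathrm i\langle \bm k,\widehat{\bm\omega}\rangle)$.

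At step $m$ we take $\epsilon_m=\epsilon^{(3/2)^m}$, shrink $h_m\searrow h/2$ and $r_m\searrow r/2$ geometrically, and decrease the weight $\iota_m\searrow\iota/2$ with $\iota_m-\iota_{m+1}\sim \iota 2^{-m-2}$. The key estimates are the following.

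(i) \emph{Small divisors in the weighted norm.} By \eqref{eq:Diophantine}, dividing by $\langle\bm k,\widehat{\bm\omega}\rangle$ costs $\eta^{-1}\prod_{j\in\underline{\bm k}}(1+k_j^2\lfloor j\rfloor^4)^{C_0}$; the top index $\overline{\bm n}$ carries no loss. We bound $(1+k_j^2\lfloor j\rfloor^4)^{C_0}e^{-\delta|k_j|}e^{-(\iota_m-\iota_{m+1})w\ln^\sigma\lfloor j\rfloor}$. Since $\sigma>2$, we have $\ln\lfloor j\rfloor\ll \delta'\ln^\sigma\lfloor j\rfloor$ for large $j$. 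The product over $j\in\underline{\bm n}$ is then bounded by $\exp\big(C\,\delta^{-c}\big)$ for a sub-step-dependent constant, through the standard estimate $\sup_{x}\{C_0 x-\delta' x^\sigma\}\lesssim \delta'^{-1/(\sigma-1)}$ with $x=\ln\lfloor j\rfloor$. Counting $|k_j|$ uses the $e^{-|k_j|(h_m-h_{m+1})}$ factor. This yields $\interleave F\interleave_{\iota_{m+1},r_m,h_{m+1}}\le C_m\epsilon_m$ with $C_m$ growing only superexponentially in $m$, which is dominated by the super-exponential decay of $\epsilon_m$. The particular threshold $C_0(d)\ge\frac{4d^2+8d+1}{4d(1-d)}$ should enter exactly here, when we balance the polynomial loss against the splitting of the weight exponent between derivative loss and the Fourier gain (with parameter $d$).

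(ii) \emph{Poisson bracket estimate.} We need $\interleave\{F,G\}\interleave_{\iota',r',h'}\lesssim (r-r')^{-1}(h-h')^{-1}\interleave F\interleave_{\iota,r,h}\interleave G\interleave_{\iota,r,h}$. Here the merging of supports is the crux. When two monomials with tops $j_1,j_2$ are bracketed through a common index, the new top is $\max(j_1,j_2)$ and the smaller top joins $\underline{\bm n}$, where it only needs the weight $e^{\iota w\ln^\sigma}$, which is less than $e^{w\ln^\sigma}$. Repeated indices in the product appear only once, so no weight is double counted. The $I(0)^{\bm\beta}$ factors are harmless because they sit inside $\|\cdot\|_{r,h}$ and $|I(0)|_w$ is finite. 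We then iterate via the Lie series $e^{\mathrm{ad}F}$, summing as in Cauchy estimates.

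(iii) \emph{New error.} The quantity $R_{m+1}=R^{\ge2}_m+\{R_m^{\ge2},F\}+\text{(higher Lie terms)}$ has degree-$\le1$ part of size $\lesssim C_m\epsilon_m^2\le\epsilon_{m+1}$, while the degree-$\ge2$ part changes by at most $\epsilon_m$-summable amounts. This gives \eqref{1.9} with constant $2\epsilon$, since $\sum_m C\epsilon_m\le\epsilon$ for $\epsilon$ small. The composition $\Phi=\lim\Phi_0\circ\cdots\circ\Phi_m$ converges on $D(r/2,h/2)$. Its vector field is bounded in $|\cdot|_w\times|\cdot|_\infty$ because the top-index weight $e^{w\ln^\sigma\lfloor j_0\rfloor}$ controls $\partial_{\theta_{j_0}}$ components in $|\cdot|_w$.

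The main obstacle is step (i): showing that the Bourgain-type loss, which involves only the non-top indices, is absorbed by the decrement of $\iota$ and $h$ uniformly in the infinitely many sites. It must also remain compatible with (ii), where the product structure of the weight has to be preserved under brackets. I would attempt (i) first with the explicit choices above and tune the allocation of $d$ to recover the stated $C_0(d)$.
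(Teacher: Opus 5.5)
\textbf{The frequency cannot be kept frozen.} The decisive gap is how you handle the averaged linear term $[R^{(1)}]=\sum_j\langle R\rangle_jJ_j$. A translation $J\mapsto J+\xi$ does not remove it. Since $N_*$ is linear in $J$, the shift only creates constants from $N_*$. It changes the averaged $J$-linear coefficients only through $\partial_J[R^{(2)}](\xi)$. Cancelling $\langle R\rangle$ would therefore need an invertible (twist) quadratic part, but $R^{(2)}$ is an arbitrary small term and may even vanish. The shift would also move the torus off $I=I(0)$, which you say you want to avoid.

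In fact no scheme that keeps $N_*=\sum_j\widehat\omega_jJ_j$ fixed can succeed. Take $R=\epsilon' J_1$ with $\epsilon'=\epsilon e^{-w}/r$, so $\interleave R\interleave_{\iota,r,h}=\epsilon$. Every orbit of $H$ then satisfies $\dot\theta_1=\widehat\omega_1+\epsilon'$. A near-identity symplectic $\Phi$ with $H\circ\Phi=\sum_j\widehat\omega_jJ_j+O(J^2)$ would conjugate the rotation by $\widehat\omega$ on $\{J=0\}$ to this flow, which is impossible.

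The paper's proof handles this by frequency modulation. It treats $\omega$ as a parameter and lets the normal form drift, $N_{s+1}=N_s+[R^{(0)}_s]+[R^{(1)}_s]$. Using the derivative control \eqref{199} and the inverse function theorem, it picks $\omega^*_{s+1}$ with $\Omega_{s+1}(\omega^*_{s+1})=\widehat\omega$ (Part 4 of Lemma~\ref{IL}). The conclusion is thus reached for the Hamiltonian with frequency $\omega^*=\lim\omega^*_s$, where $|\omega^*-\widehat\omega|_\infty\le\epsilon^{0.4}$. This is how the hypothesis $\bm\omega=\widehat{\bm\omega}$ must be read, and this parameter-adjustment step is missing from your plan.

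\textbf{Step (i) also fails as set up.} Because $\underline{\bm n}$ may contain arbitrarily many sites, the per-site bound $\sup_x\{4C_0x-\delta' w x^\sigma\}\lesssim(\delta')^{-1/(\sigma-1)}$ does not bound the product. Every site with $\ln\lfloor j\rfloor\lesssim(\delta')^{-1/(\sigma-1)}$ contributes a factor larger than $1$, and there are about $\exp(c\,(\delta')^{-1/(\sigma-1)})$ such sites. So the best this route gives is a loss of order $\exp(\exp(c\,(\delta')^{-1/(\sigma-1)}))$. With $\delta'\sim 2^{-m}$ this swamps $\epsilon_m^{-1}=\exp((3/2)^m\ln\epsilon^{-1})$.

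The paper proceeds differently:
\begin{itemize}
\item It truncates to modes with $\sum_{j\in\underline{\bm n}}\ln^\sigma\lfloor j\rfloor<B_s/w$ and $|\bm k|<B_s/h$, where $B_s=\ln\epsilon_s^{-1}/\delta_s$. The discarded modes are $O(\epsilon_s^{3/2})$ by the saving factors and are moved into $R_{s+1}$.
\item For retained modes it shows $\sum_{j\in\underline{\bm k}}\ln\lfloor j\rfloor\le C B_s(\ln B_s)^{1-\sigma}$.
\item It uses the slowly decaying schedule $\delta_s=1/((s+4)\ln^2(s+4))$. Then $\delta_s^{-1}(\ln B_s)^{1-\sigma}\to 0$ precisely because $\sigma>2$, which gives the loss $\epsilon_s^{-0.01}$.
\end{itemize}
Even with truncation a geometric schedule does not close: then $B_s\sim 3^s\ln\epsilon^{-1}$ and $B_s(\ln B_s)^{1-\sigma}\gg(3/2)^s\ln\epsilon^{-1}$.

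Two smaller points. First, $C_0(d)$ plays no role in the iteration; it is only needed in the measure estimate of Section~\ref{ME}. Second, in (iii) the degree-$\le 1$ part of $R_{m+1}$ is not $O(\epsilon_m^2)$. The bracket $\{R^{(2)}_m,F^{(0)}\}$ has a degree-one component of size about $\epsilon_0\epsilon_m$, which exceeds $\epsilon_m^{3/2}$ once $m\ge 2$. This is why the paper tracks $\interleave R^{(0)}_s\interleave\le\epsilon_s$ and $\interleave R^{(1)}_s\interleave\le\epsilon_s^{0.6}$ separately.
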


\begin{cor}[Full-dimensional invariant tori]\label{cor:torus}
	Under the hypotheses of Theorem~\ref{thm:main}, the torus
	$$ \mathcal T_0:=\{(I,\theta):\ I=I{(0)}\} $$
	are invariant under the flow of $H\circ\Phi$, and the dynamics on $\mathcal T_0$ is the linear
	flow with frequency vector $\widehat{\bm \omega}$. 
	Consequently,
    $$ \mathcal E:=\Phi^{-1}(\mathcal T_0) $$
	is a full-dimensional invariant torus for the original Hamiltonian $H$ in
	\eqref{H}. 
\end{cor}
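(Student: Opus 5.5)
The statement to prove is Corollary~\ref{cor:torus}. It follows directly from the normal form \eqref{eq:normal-form} of Theorem~\ref{thm:main}. I would write Hamilton's equations for $\widetilde H:=H\circ\Phi=N_*+R_*$ in the variables $(J,\theta)$ on $D(r/2,h/2)$:
\[
\dot\theta_j=\partial_{J_j}\widetilde H=\widehat\omega_j+\partial_{J_j}R_*,\qquad \dot J_j=-\partial_{\theta_j}\widetilde H=-\partial_{\theta_j}R_*.
\]
The key observation is that every monomial of $R_*$ carries $J^{\bm\alpha}$ with $|\bm\alpha|\ge2$. Consequently, on $\{J=0\}$ we have $\partial_{\theta_j}R_*=0$ (each term still contains a factor $J^{\bm\alpha}$ with $|\bm\alpha|\ge 2$) and $\partial_{J_j}R_*=0$ (each derivative leaves a monomial of degree $\ge1$ in $J$). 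Since $J=I-I(0)$, the torus $\mathcal T_0=\{J=0\}$ is exactly this set.

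To make the termwise differentiation rigorous in infinite dimensions, I would use the bound \eqref{1.9}, $\interleave R_*\interleave_{\iota/2,r/2,h/2}\le2\epsilon<\infty$. This gives absolute and uniform convergence of the Fourier--Taylor series on $D(r/2,h/2)$, because $|J^{\bm\alpha}|\le (r/2)^{|\bm\alpha|}$ there, the factors $|I(0)^{\bm\beta}|$ are absorbed in the weights via $|I(0)|_w<\infty$, and $|e^{\mathrm i\langle\bm k,\theta\rangle}|\le e^{|\bm k|h/2}$. Cauchy estimates on a slightly smaller domain, say $D(r/4,h/4)$, then justify computing $\partial_{J_j}R_*$ and $\partial_{\theta_j}R_*$ term by term. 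On this smaller domain the series for the derivatives converge and vanish identically at $J=0$, by the degree count above.

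With the derivatives in hand, the vector field restricted to $\{J=0\}$ is $(\dot J,\dot\theta)=(0,\widehat{\bm\omega})$. This field is tangent to $\mathcal T_0$, so $t\mapsto(0,\theta_0+t\widehat{\bm\omega})$ is the solution through $(0,\theta_0)$. Uniqueness of solutions holds because the Hamiltonian vector field is analytic, hence locally Lipschitz on the Banach space $\mathfrak H_{\sigma,w,\infty}$. Therefore $\mathcal T_0$ is invariant and carries the linear flow with frequency $\widehat{\bm\omega}$; since every $\theta_j$ rotates, the torus is full dimensional.

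The last step is to transport the torus back. Because $\Phi$ is symplectic, it conjugates the flows: $\phi^t_{H}\circ\Phi=\Phi\circ\phi^t_{H\circ\Phi}$. Hence $\Phi$ maps $\mathcal T_0$ onto an invariant torus of $H$, and the dynamics there is conjugate to the rotation by $\widehat{\bm\omega}$. The statement writes $\Phi^{-1}(\mathcal T_0)$; I read this as $\Phi(\mathcal T_0)$, given that $\Phi:D(r/2,h/2)\to D(r,h)$ is the direction in which the conjugacy holds. The only delicate point is this functional-analytic justification (convergence of the derivative series and well-posedness of the flow in the weighted space), and \eqref{1.9} together with analyticity of $\Phi$ handles it.
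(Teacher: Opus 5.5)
Your proposal is correct and is the argument the paper leaves implicit; the corollary is stated there without proof. Every monomial of $R_*$ has $|\bm\alpha|\ge 2$, so $X_{H\circ\Phi}=(0,\widehat{\bm\omega})$ on $\{J=0\}$, and the symplectic map $\Phi$ carries this linear flow to an invariant torus of $H$. You are also right that this torus is $\Phi(\mathcal T_0)$ rather than $\Phi^{-1}(\mathcal T_0)$, since $\Phi:D(r/2,h/2)\to D(r,h)$ and $H\circ\Phi=N_*+R_*$.
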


\subsection{Comments and outline of the proof.}
We begin with two comments, then outline the proof.
\begin{itemize}
	\item The invariant tori in Corollary~\ref{cor:torus} satisfy
	\begin{align*}
		I_j \sim e^{-\ln^{\sigma} |j|}, 
		\quad \text{with } \sigma>2.
	\end{align*}
	This decay rate lies strictly between the exponential decay $e^{-\kappa |j|}, \kappa>1 $ of \cite{DFP2026} and the polynomial decay $|j|^{-C}, C>0$.

	\item Second, we refer to Section~\ref{ME} for the precise description of the admissible frequencies. As in \cite{DFP2026}, we do not claim that our result holds for a set of large measure in $[a,b]^{\mathbb N}$ with respect to the product measure. 
	Instead, we establish the existence of invariant tori for uncountably many $\bm \omega \in [a,b]^{\mathbb N}$.
\end{itemize}

Before outlining the proof, we recall that in the PDE literature, Bourgain \cite{Bour2005} constructed KAM tori with sub-exponential decay $I_n \sim e^{-\sqrt{|n|}}$. 
This was subsequently generalized by Cong--Liu--Shi--Yuan \cite{CLSY2018} to the regime $I_n \sim e^{-|n|^\theta}$ with $\theta \in (0,1)$. 
Very recently, Cong \cite{Con2024} pushed this further to the decay $I_j \sim e^{-\ln^{\sigma} |j|}, \sigma>2$, which is the rate we adopt in the present work.

A natural question is whether this approach can be extended to infinite-dimensional mechanical systems with long-range interactions. 
The central obstacle is that 
the aforementioned works rely crucially on Bourgain's Diophantine condition
\begin{align*}
	|\langle \bm k, \bm \omega \rangle| \geq 
	\eta \prod_{j \in \operatorname{supp} k} \left(1 + k_j^2 j^4\right)^{-1}.
\end{align*}
However, as pointed out in \cite[Remark~8]{DFP2026}, the direct use of Bourgain's Diophantine condition in infinite-dimensional mechanical systems with long-range interactions is highly nontrivial, due to the lack of spatial decay with respect to the index $j$.

Indeed, the presence of the maximal index $\overline{\bm k}$
appears to be uncontrollable during the KAM iteration in the long-range setting. 
We are therefore led to impose a stronger Diophantine condition of the form \eqref{eq:Diophantine}, in which the maximal index is omitted. 
This simplification renders the iterative estimates rather standard, see Sections \ref{KAM} and \ref{Pomain}, where the analysis mostly follows \cite{Bour2005} and \cite{Con2024}.

As is well known in KAM theory, there is a delicate balance between the strength of the small-divisor condition and the size of the admissible frequency set: the stronger the Diophantine condition, the fewer frequencies satisfy it. 
Consequently, the main difficulty is shifted to the measure estimates. 
Nevertheless, following the strategy of \cite{DFP2026}, we are able to show that uncountably many frequencies still verify our condition. 
The argument hinges on two main points.
The first, adopted from \cite{DFP2026}, uses the notion of box dimension to show that the number of maximal indices is effectively finite rather than infinite (see \eqref{j0}). 
The second is an observation of the present work: since $\bm \omega$ is bounded, no resonance occurs when $k_{j_0}$ with $j_0 = \overline{\bm k}$ is sufficiently large (see \eqref{ko}).
Moreover, the dependence of the constant $C_0(d)$ in \eqref{eq:Diophantine} on the box dimension $d$ reveals that the case $d=1$  is currently beyond the scope of our approach.

\textbf{More related results.}
Recently, long-range hopping has also attracted considerable attention in the linear setting.
Shi \cite{Shi21} proved power-law localization for random operators on $\mathbb{Z}^d$ via multi-scale analysis, and later extended these ideas to almost-periodic operators by a Nash-Moser reducibility approach \cite{Shi23}. 
More recently, Shi and Wen \cite{SW25} established quantitative Green's function estimates for quasi-periodic operators with power-law long-range hopping, while Li, Wang, You and Zhou \cite{LiWYZ26} developed an operator-theoretic framework for long-range operators over general dynamical systems with analytic hopping and small potential.  
Among many other related contributions, we refer to the references cited above for a more complete picture.

\subsection*{Structure of the paper.}
The paper is organized as follows. 
In Section \ref{Int} we introduce the functional setting and state the main result. 
In section \ref{PB} we devote to the estimate of the Poisson bracket, which lies at the heart of the KAM iteration. 
We construct the iterative scheme in Section \ref{KAM} and prove the main Theorem in Section \ref{Pomain}. 
The measure estimate for the Diophantine frequencies is carried out in Section \ref{ME}. 
Finally, some auxiliary proofs are collected in Appendix \ref{Ap}.

\section{The abstract results}\label{PB}
This section establishes the fundamental estimates for the Poisson bracket, the Hamiltonian flow, and the Hamiltonian vector field in terms of the weighted norms introduced in Section~\ref{Int}. 
These estimates constitute the analytic core of the KAM
iteration in Section~\ref{KAM}.

\subsection{Poisson bracket}
\begin{defn}[Poisson bracket]\label{def:poisson}
	For two Hamiltonians $R$ and $F$ on $D(r,h)$, the Poisson bracket $\{R,F\}$ is
	defined by
	$$
	\{R,F\}
	= \sum_{\ell\in\mathbb N}
	\frac{\partial R}{\partial\theta_{\ell}}\frac{\partial F}{\partial J_{\ell}}
	-\frac{\partial R}{\partial J_{\ell}}\frac{\partial F}{\partial\theta_{\ell}}.
	$$
\end{defn}
To distinguish the Taylor-Fourier modes of $F$ from those of $R$, we write
\begin{equation}\label{F}
	F = \sum_{\bm m}F_{\bm m}
	= \sum_{\bm m}F(\bm m)\,J^{\widetilde{\bm\alpha}}\,I(0)^{\widetilde{\bm\beta}}\,
	e^{\mathrm i\langle\widetilde{\bm k},\theta\rangle},
	\qquad
	\bm m=(\widetilde{\bm\alpha},\widetilde{\bm\beta},\widetilde{\bm k}).
\end{equation}
The main estimate of this section is the following.
\begin{lem}[Poisson bracket]\label{lempb}
	Let $\sigma > 2$, $w>0$, $\iota \in (0,1)$ and $r,h,r_1,h_1,r_2,h_2>0$.
	Then for any Hamiltonians $R$ and $F$ of the form \eqref{GeneralR} and \eqref{F} respectively, we have
$$
	\interleave\{R,F\}\interleave_{\iota,r,h}
	\le 
	3
	\left(\frac{1}{r_1h_2}+\frac{1}{r_2h_1}\right) 
	\interleave R\interleave_{\iota,r+r_1,h+h_1}
	\interleave F\interleave_{\iota,r+r_2,h+h_2}.
$$
\end{lem}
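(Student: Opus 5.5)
The plan is a mode-by-mode computation, followed by a comparison of the weights and a Cauchy-type loss of domain. By bilinearity it suffices to treat monomials: for modes $R_{\bm n}$ and $F_{\bm m}$ the bracket $\{R_{\bm n},F_{\bm m}\}$ is a finite sum over $\ell\in\operatorname{supp}\bm n\cap\operatorname{supp}\bm m$ of monomials with index
$$\bm p=(\bm\alpha+\widetilde{\bm\alpha}-e_\ell,\ \bm\beta+\widetilde{\bm\beta},\ \bm k+\widetilde{\bm k}).$$
The coefficients are $\mathrm i k_\ell\widetilde\alpha_\ell R(\bm n)F(\bm m)$ and $-\mathrm i\alpha_\ell\widetilde k_\ell R(\bm n)F(\bm m)$ respectively. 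Only $\ell$ in the common support contributes, since a derivative in $\theta_\ell$ or $J_\ell$ kills a monomial not involving $\ell$. I will first estimate the $\|\cdot\|_{r,h}$ norm of each such term, and then check that the weight attached to $\bm p$ in \eqref{norm} is dominated by the product of the weights of $\bm n$ and $\bm m$.

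\textbf{Step 1 (Cauchy-type loss).} For the first type of term,
$$|k_\ell|\,\widetilde\alpha_\ell\, r^{|\bm\alpha|+|\widetilde{\bm\alpha}|-1+|\bm\beta|+|\widetilde{\bm\beta}|}e^{|\bm k+\widetilde{\bm k}|h}
\le \frac{|k_\ell|\widetilde\alpha_\ell}{r}\,\bigl(r^{|\bm\alpha|+|\bm\beta|}e^{|\bm k|h}\bigr)\bigl(r^{|\widetilde{\bm\alpha}|+|\widetilde{\bm\beta}|}e^{|\widetilde{\bm k}|h}\bigr).$$
Summed over $\ell$, I bound $\sum_\ell |k_\ell|\widetilde\alpha_\ell\le|\bm k|\,|\widetilde{\bm\alpha}|$. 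Then I use the elementary inequalities $|\bm k|e^{|\bm k|h}\le \frac{1}{e h_1}e^{|\bm k|(h+h_1)}$ and $\frac{|\widetilde{\bm\alpha}|}{r}\,r^{|\widetilde{\bm\alpha}|}\le \frac{1}{r_2}(r+r_2)^{|\widetilde{\bm\alpha}|}$ (from $x\,(r/(r+r_2))^{x}\le (r+r_2)/(e\,r_2)$, or more crudely via $(1+r_2/r)^{x}\ge 1+xr_2/r$). This gives $\|\cdot\|\le \frac{C}{r_2h_1}\|R_{\bm n}\|_{r+r_1,h+h_1}\|F_{\bm m}\|_{r+r_2,h+h_2}$. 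The second type is symmetric and produces $\frac{1}{r_1h_2}$. The constant $3$ should absorb the factors $1/e$ together with the $r/(r+r_2)$ adjustments, using $r^{-1}$ versus $(r+r_2)$; I will keep the bookkeeping loose there.

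\textbf{Step 2 (weights).} This is the step I expect to be the main obstacle. Write $S=\operatorname{supp}\bm n$, $T=\operatorname{supp}\bm m$, and note $\operatorname{supp}\bm p\subset S\cup T$ with $S\cap T\neq\emptyset$. I need
$$e^{w\ln^\sigma\lfloor\overline{\bm p}\rfloor}\prod_{j\in\underline{\bm p}}e^{\iota w\ln^\sigma\lfloor j\rfloor}\le \Bigl(e^{w\ln^\sigma\lfloor\overline S\rfloor}\prod_{j\in\underline S}e^{\iota w\ln^\sigma\lfloor j\rfloor}\Bigr)\Bigl(e^{w\ln^\sigma\lfloor\overline T\rfloor}\prod_{j\in\underline T}e^{\iota w\ln^\sigma\lfloor j\rfloor}\Bigr).$$
Say $\overline S\ge\overline T$. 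If $\overline{\bm p}\le\overline S$, the left side uses weight $1$ on $\overline{\bm p}$ only if $\overline{\bm p}$ is the max, and cancellations in $\bm k+\widetilde{\bm k}$ only shrink the support. Indices in $\operatorname{supp}\bm p\setminus\{\overline{\bm p}\}$ lie in $S\cup T$, and each receives weight $\iota$, which is at most the weight (either $\iota$ or $1$) it carries on the right. The one delicate point is that $\overline{\bm p}$ may be smaller than $\overline S$ after cancellation. In that case $\overline{\bm p}$ receives full weight $1$ on the left while possibly only $\iota$ on the right. I handle this by monotonicity: $\ln^\sigma\lfloor\overline{\bm p}\rfloor\le\ln^\sigma\lfloor\overline S\rfloor$, so I trade the missing $(1-\iota)w\ln^\sigma\lfloor\overline{\bm p}\rfloor$ against the unused factor $e^{w\ln^\sigma\lfloor\overline S\rfloor}$, whose index has dropped out of $\operatorname{supp}\bm p$. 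An index in $S\cap T$ appears once on the left but twice on the right, which only helps.

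\textbf{Step 3 (summation).} I sum over $\bm p$ and regroup as a sum over pairs $(\bm n,\bm m)$ and $\ell$. Since a given $\bm p$ may arise from many pairs, I use the triangle inequality and do not need injectivity. The double sum then factorizes into $\interleave R\interleave_{\iota,r+r_1,h+h_1}\interleave F\interleave_{\iota,r+r_2,h+h_2}$, which yields the claimed bound.
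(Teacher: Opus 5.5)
Your proof is correct, but it is organized differently from the paper's.

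The paper splits $\{R,F\}$ according to whether $\overline{\bm n}>\overline{\bm m}$, $\overline{\bm n}<\overline{\bm m}$ or $\overline{\bm n}=\overline{\bm m}$. It bounds each piece with constant $1$ and adds them, which is where the $3$ comes from. Its Cauchy estimate for $\partial_{J_\ell}$ carries a factor $e^{w\ln^\sigma\lfloor\ell\rfloor}/r_1$, as it would on the weighted domain $|J|_w<r$. That loss is absorbed by slack in the weights: the contracted index $\ell$ is counted twice on the right, and $e^{\iota w\ln^\sigma\lfloor\overline{\bm m}\rfloor}e^{(1-\iota)w\ln^\sigma\lfloor\ell\rfloor}\le e^{w\ln^\sigma\lfloor\overline{\bm m}\rfloor}$ because $\ell\le\overline{\bm m}$.

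You instead read the Cauchy step directly off the coefficient definition of $\|\cdot\|_{r,h}$, where there is no $\ell$-dependent loss. You then prove a single pointwise domination of the weight of $\bm p(\ell)$ by $W(\bm n)W(\bm m)$, valid for every $\ell$. So neither the case split nor the double counting of $\ell$ is needed. Your delicate case can only occur when $\overline S=\overline T$: otherwise $\ell\neq\overline S$, so $\bm p$ agrees with $\bm n$ at $\overline S$. It is precisely the paper's third case, and both arguments settle it with the same unused top weight.

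Your version is shorter and covers $\ell=\overline{\bm n}=\overline{\bm m}$ uniformly; the paper's sum over $\ell\in\underline{\bm n}\cap\underline{\bm m}$ formally omits that term. It also yields the constant $1/e$ rather than $3$. No loose bookkeeping is needed if you use the Bernoulli form $x\,r^{x-1}r_2\le(r+r_2)^x$. The $x\,q^x$ bound you also mention would leave a factor $(r+r_2)/(er)$, which is not uniformly bounded in $r_2/r$, so drop it.

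What the paper's route buys is the observation that the weights can absorb an extra factor $e^{w\ln^\sigma\lfloor\ell\rfloor}$ per bracket, i.e.\ the cost of a genuine Cauchy estimate on the weighted domain. Your sharper Step 1 simply does not need that slack.
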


\begin{proof}
	For \(\bm n,\bm m\in\mathbb N^{\mathbb N}\times\mathbb N^{\mathbb N}\times\mathbb Z^{\mathbb N}\),
	\begin{align}
		\{R_{\bm n},F_{\bm m}\}
		& = \sum_{\ell \in\operatorname{supp}\bm n\cap\operatorname{supp}\bm m}
		\frac{\partial R_{\bm n}}{\partial\theta_{\ell}}\frac{\partial F_{\bm m}}{\partial J_{\ell}}
		-\frac{\partial R_{\bm n}}{\partial J_{\ell}}\frac{\partial F_{\bm m}}{\partial\theta_{\ell}}\notag\\
		& =
		\mathrm{i} \sum_{\ell}(k_{\ell}\widetilde{\alpha}_{\ell}-\widetilde{k}_{\ell}\alpha_{\ell})\,
		R(\bm n)F(\bm m)\,J^{\bm\alpha+\widetilde{\bm\alpha}-\bm e_{\ell}}\,
		I(0)^{\bm\beta+\widetilde{\bm\beta}}\,
		e^{\mathrm i\langle\bm k+\widetilde{\bm k},\theta\rangle}.\label{A2.2}
	\end{align}
	Write
	\begin{align*}
		\left\{ R_{\bm n} , F_{\bm m} \right\} = 
		\sum_{\bm \mu \in \mathbb{N}^{\mathbb{N}} 
			\times \mathbb{N}^{\mathbb{N}}
			\times \mathbb{Z}^{\mathbb{N}}
		}
		\left\{ R_{\bm n} , F_{\bm m} \right\}_{\bm \mu}
		=
		\sum_{\bm \mu}
		\left\{ R_{\bm n} , F_{\bm m} \right\} (\bm \mu) 
		J^{\widehat{\bm \alpha}} 
		I(0)^{\widehat{\bm \beta}}
		e^{\mathrm{i} \langle \widehat{\bm k}, \theta \rangle }
	\end{align*}
	where \(\bm\mu=(\widehat{\bm\alpha},\widehat{\bm\beta},\widehat{\bm k})\). 
	By \eqref{A2.2},
	\(\{R_{\bm n},F_{\bm m}\}(\bm\mu)\neq 0\) only if there exists \(\ell \in \mathbb N\) satisfying
	\begin{align*}
		\ell \in \operatorname{supp} \bm n \cap \operatorname{supp} \bm m,
	\end{align*}
	such that
	\begin{equation}\label{Amu}
		\bm\mu = \bm \mu (\ell) 
		:=(\bm\alpha+\widetilde{\bm\alpha}-\bm e_{\ell},\ 
		\bm\beta+\widetilde{\bm\beta},\ \bm k+\widetilde{\bm k}),
	\end{equation}
	where \(\bm e_{\ell}\) denotes the unit vector with \(\ell\)-th entry equal to \(1\).
	Consequently, we have
	\begin{itemize}
		\item For $\overline{\bm n} > \overline{\bm m}$, index $\bm \mu$ satisfies
		\begin{align}\label{mu1}
			\overline{\bm \mu} = \overline{\bm n}
			\quad \text{and} \quad
			\underline{\bm \mu} \subset 
			\left( \underline{\bm n} \cup \operatorname{supp} \bm m \right).
		\end{align}	
		\item For $\overline{\bm n} < \overline{\bm m}$,  index $\bm \mu$ satisfies
		\begin{align*}
			\overline{\bm \mu} = \overline{\bm m}
			\quad \text{and} \quad
			\underline{\bm \mu} \subset 
			\left( \operatorname{supp} \bm n \cup \underline{\bm m} \right).
		\end{align*}
		\item For $\overline{\bm n} = \overline{\bm m}$, index $\bm \mu$ satisfies
		\begin{align}\label{mu3}
			\overline{\bm \mu} \le \overline{\bm n} = \overline{\bm m}
			\quad \text{and} \quad
			\underline{\bm \mu} \subset 
			\left( \underline{\bm n} \cup \underline{\bm m} \right).
		\end{align}
	\end{itemize}
	Then, we divide $\{ R,F \}$ into the following three parts.
	\begin{align*}
		\{R,F\} = f^{(1)} + f^{(2)} + f^{(3)}
		:= \sum_{\substack{\bm n,\bm m \\ \overline{\bm n} > \overline{\bm m}} } 
		\{ R_{\bm n}, F_{\bm m} \}
		+ \sum_{\substack{\bm n,\bm m \\ \overline{\bm n} < \overline{\bm m}}} 
		\{ R_{\bm n}, F_{\bm m} \}
		+ \sum_{\substack{\bm n,\bm m \\ \overline{\bm n} = \overline{\bm m}}} 
		\{ R_{\bm n}, F_{\bm m} \}.
	\end{align*}
	
	We begin by estimating $f_1$.
	In view of \eqref{norm}, \eqref{Amu} and \eqref{mu1}, we have 
	\begin{align}
		\nonumber
		\interleave f^{(1)} \interleave_{\iota,r,h}
	& = 
		\sum_{j_0 \in \mathbb{N}}
		e^{w \ln^{\sigma}\lfloor j_0 \rfloor}
		\sum_{\overline{\bm \mu} = j_0}
		\left(
		\prod_{j \in \underline{\bm \mu}} 
		e^{\iota w \ln^{\sigma}\lfloor j \rfloor}
		\right)
		\left\| f^{(1)}_{\bm \mu} \right\|_{r,h}
	\\  & \le 
		\sum_{j_0 \in \mathbb{N}}
		e^{w \ln^{\sigma}\lfloor j_0 \rfloor}
		\sum_{\overline{\bm n} = j_0}
		\sum_{\overline{\bm m} < j_0}
		\sum_{\ell \in \left( \underline{\bm n} \cap \operatorname{supp} \bm m \right) }
		\left( 
		\prod_{j \in \underline{\bm \mu (\ell)}} 
		e^{\iota w \ln^{\sigma}\lfloor j \rfloor}
		\right)
		\left\| f^{(1)}_{\bm \mu (\ell)} \right\|_{r,h}
		,
		\label{255101625}
	\end{align}
	where 
	\begin{align*}
		f^{(1)}_{\bm \mu (\ell)} = 
		\frac{\partial R_{\bm n}}{\partial \theta_{\ell}}
		\frac{\partial F_{\bm m}}{\partial J_{\ell}}
		-
		\frac{\partial R_{\bm n}}{\partial J_{\ell}}
		\frac{\partial F_{\bm m}}{\partial \theta_{\ell}}
		.
	\end{align*}
	We proceed to estimate the two factors in \eqref{255101625} separately.
	
	\noindent\textbf{(1) Estimate $\left\| f^{(1)}_{\bm \mu (\ell)} \right\|_{r,h}$: }

	Note that
	\begin{align*}
		\frac{\partial R_{\bm n}}{\partial \theta_{\ell}}
		=
		\mathrm{i}
		k_{\ell}
		R(\bm n) 
		J^{\bm \alpha} 
		I(0)^{\bm \beta}
		e^{\mathrm{i} \langle \bm k, \theta \rangle},
		\quad
		k_{\ell} \neq 0,
	\end{align*}
	we have
	\begin{align}\label{p1}
		\left\| \frac{\partial R_{\bm n}}{\partial \theta_{\ell}} \right\|_{r,h}
		=
		|k_{\ell}| \left\| R_{\bm n} \right\|_{r,h}.
	\end{align}
	From the Cauchy estimate, we have
	\begin{align}\label{p2}
			\left\| \frac{\partial R_{\bm n}}{\partial J_{\ell}} \right\|_{r,h}
		\le 
		\frac{e^{w\ln^{\sigma}\lfloor \bm l \rfloor}}{r_1}
		\left\| R_{\bm n} \right\|_{r+r_1,h}.
	\end{align}
	Combining \eqref{p1} and \eqref{p2}, we can conclude that
	\begin{align}
		\left\| f^{(1)}_{\bm \mu (\ell)} \right\|_{r,h} 
		\le 
		& \nonumber
		\left\|
		\frac{\partial R_{\bm n}}{\partial \theta_{\ell}}
		\right\|_{r,h}
		\left\|
		\frac{\partial F_{\bm m}}{\partial J_{\ell}}
		\right\|_{r,h}
		+
		\left\|
		\frac{\partial R_{\bm n}}{\partial J_{\ell}}
		\right\|_{r,h}
		\left\|
		\frac{\partial F_{\bm m}}{\partial \theta_{\ell}}
		\right\|_{r,h}
		\\  \le & \label{p3}
		\left| k_{\ell} \right|
		\frac{  e^{w\ln^{\sigma}\lfloor \ell \rfloor}}{r_2}
		\left\| R_{\bm n} \right\|_{r,h}
		\left\| F_{\bm m} \right\|_{r+r_2,h}
		+
		|\widetilde{k}_{\ell}|
		\frac{  e^{w\ln^{\sigma}\lfloor \ell \rfloor}}{r_1}
		\left\| R_{\bm n} \right\|_{r+r_1,h}
		\left\| F_{\bm m} \right\|_{r,h}.
	\end{align}
	\noindent\textbf{(2) Estimate $\prod_{j \in \underline{\bm \mu (\ell)}} 
		e^{\iota w \ln^{\sigma}\lfloor j \rfloor}$: }
	
	In view of \eqref{mu1}, since $\ell \in \left( \underline{\bm n} \cap \operatorname{supp} \bm m \right)$, we have
	\begin{align}
		\nonumber
		\prod_{j \in \underline{\bm \mu (\ell)}} 
		e^{\iota w \ln^{\sigma}\lfloor j \rfloor}
		& \le 
		\prod_{j \in \underline{\bm n} } 
		e^{\iota w \ln^{\sigma}\lfloor j \rfloor}
		\cdot
		\prod_{j \in \operatorname{supp} \bm m } 
		e^{\iota w \ln^{\sigma}\lfloor j \rfloor}
		\cdot
		e^{-\iota w \ln^{\sigma}\lfloor \ell \rfloor}
		\\ & = 
		\prod_{j \in \underline{\bm n} } 
		e^{\iota w \ln^{\sigma}\lfloor j \rfloor}
		\cdot
		e^{\iota w \ln^{\sigma}\lfloor j'_0 \rfloor}
		\prod_{j \in \underline{\bm m} } 
		e^{\iota w \ln^{\sigma}\lfloor j \rfloor}
		\cdot
		e^{-\iota w \ln^{\sigma}\lfloor \ell \rfloor},
		\quad j'_0 = \overline{\bm m}.
		\label{f11}
	\end{align}
By using \eqref{p3} and \eqref{f11}, we know the summation over $\ell$ in \eqref{255101625} can be estimated by
	\begin{align}
		\nonumber
		& \quad 
		\sum_{\ell \in \left( \underline{\bm n} \cup \operatorname{supp} \bm m \right) }
		\prod_{j \in \underline{\bm \mu (\ell)}} 
		e^{\iota w \ln^{\sigma}\lfloor j \rfloor}
		\left\| f^{(1)}_{\bm \mu (\ell)} \right\|_{r,h} 
	\\ & \le  \nonumber
		\prod_{j \in \underline{\bm n} } 
		e^{\iota w \ln^{\sigma}\lfloor j \rfloor}
		\cdot
		e^{\iota w \ln^{\sigma}\lfloor j'_0 \rfloor}
		\prod_{j \in \underline{\bm m} } 
		e^{\iota w \ln^{\sigma}\lfloor j \rfloor} 
		\sum_{\ell \in \left( \underline{\bm n} \cap \operatorname{supp} \bm m \right) }
		e^{(1-\iota) w \ln^{\sigma}\lfloor \ell \rfloor}
		\\  &  \quad  \nonumber
		\cdot
		\left(
		\frac{  \left| k_{\ell} \right| }{r_2}
		\left\| R_{\bm n} \right\|_{r,h}
		\left\| F_{\bm m} \right\|_{r+r_2,h}
		+
		\frac{ |\widetilde{k}_{\ell}| }{r_1}
		\left\| R_{\bm n} \right\|_{r+r_1,h}
		\left\| F_{\bm m} \right\|_{r,h}
		\right)
		\quad \text{with} \quad j'_0 = \overline{\bm m}
	\\ & \le  \nonumber
		\prod_{j \in \underline{\bm n} } 
		e^{\iota w \ln^{\sigma}\lfloor j \rfloor}
		\cdot
		e^{w \ln^{\sigma}\lfloor j'_0 \rfloor}
		\prod_{j \in \underline{\bm m} } 
		e^{\iota w \ln^{\sigma}\lfloor j \rfloor}
		\\ & \quad \cdot
		\left(\frac{1}{r_1h_2}+\frac{1}{r_2h_1}\right)
		\left\| R_{\bm n} \right\|_{r+r_1,h+h_1}
		\left\| F_{\bm m} \right\|_{r+r_2,h+h_2}
		\quad \text{with} \quad j'_0 = \overline{\bm m},
		\label{2.8}
	\end{align}
	where we have used the facts
	\begin{align*}
		e^{-(1-\iota)w \ln^{\sigma}\lfloor j'_0 \rfloor}
		\cdot
		e^{(1-\iota)w \ln^{\sigma}\lfloor \ell \rfloor} 
		\le 1
	\end{align*}
	and
	\begin{align*}
		\sum_{\ell} |k_{\ell}| e^{-|\bm k|h_1}
		& \le |\bm k|  e^{-|\bm k|h_1}
		\le \frac{1}{h_1}.
	\end{align*}
	Substituting \eqref{2.8} into \eqref{255101625} yields
	\begin{align}\nonumber
		\interleave f^{(1)} \interleave_{\iota,r,h}
		& \le 
		\sum_{j_0 \in \mathbb{N}}
		e^{w \ln^{\sigma}\lfloor j_0 \rfloor}
		\sum_{\overline{\bm n} = j_0}
		\sum_{j_0' < j_0}
		\sum_{\overline{\bm m} = j_0'}
		\prod_{j \in \underline{\bm n} } 
		e^{\iota w \ln^{\sigma}\lfloor j \rfloor}
		\cdot
		e^{w \ln^{\sigma}\lfloor j'_0 \rfloor}
		\prod_{j \in \underline{\bm m} } 
		e^{\iota w \ln^{\sigma}\lfloor j \rfloor}
		\\ & \quad \cdot \nonumber
		\left(\frac{1}{r_1h_2}+\frac{1}{r_2h_1}\right)
		\left\| R_{\bm n} \right\|_{r+r_1,h+h_1}
		\left\| F_{\bm m} \right\|_{r+r_2,h+h_2}
	\\ & \le \nonumber
		\left(\frac{1}{r_1h_2}+\frac{1}{r_2h_1}\right)
		\left( 
		\sum_{j_0 \in \mathbb{N}}
		e^{w \ln^{\sigma}\lfloor j_0 \rfloor}
		\sum_{\overline{\bm n} = j_0} 
		\prod_{j \in \underline{\bm n} } 
		e^{\iota w \ln^{\sigma}\lfloor j \rfloor}
		\left\| R_{\bm n} \right\|_{r+r_1,h+h_1}
		\right)
		\\ & \quad \cdot \nonumber
		\left( 
		\sum_{j_0' \in \mathbb{N}}
		e^{w \ln^{\sigma}\lfloor j'_0 \rfloor}
		\sum_{\overline{\bm m} = j_0'}
		\prod_{j \in \underline{\bm m} } 
		e^{\iota w \ln^{\sigma}\lfloor j \rfloor}
		\left\| F_{\bm m} \right\|_{r+r_2,h+h_2}
		\right)
	\\ & = \label{f1}
		\left(\frac{1}{r_1h_2}+\frac{1}{r_2h_1}\right) 
		\interleave R\interleave_{\iota,r+r_1,h+h_1}
		\interleave F\interleave_{\iota,r+r_2,h+h_2},
	\end{align}
	which finishes the estimate of $f^{(1)}$.
	
	By symmetry, repeating the above procedure, we also obtain the estimate of $f^{(2)}$:
	\begin{align}
		\label{f2}
		\interleave f^{(2)} \interleave_{\iota,r,h}
		& \le 
		\left(\frac{1}{r_1h_2}+\frac{1}{r_2h_1}\right) 
		\interleave R\interleave_{\iota,r+r_1,h+h_1}
		\interleave F\interleave_{\iota,r+r_2,h+h_2}.
	\end{align}
	
	It remains to estimate $f^{(3)}$.
	In view of \eqref{mu3}, we can obtain
	\begin{align}
		\nonumber
		\interleave f^{(3)} \interleave_{\iota,r,h}
		& = 
		\sum_{j_0' \in \mathbb{N}}
		e^{w \ln^{\sigma}\lfloor j_0' \rfloor}
		\sum_{\overline{\bm \mu} = j_0'}
		\left(
		\prod_{j \in \underline{\bm \mu}} 
		e^{\iota w \ln^{\sigma}\lfloor j \rfloor}
		\right)
		\left\| f^{(3)}_{\bm \mu} \right\|_{r,h}
	\\  & \le  \nonumber
		\sum_{j_0' \in \mathbb{N}}
		e^{w \ln^{\sigma}\lfloor j_0' \rfloor}
		\sum_{\overline{\bm n} \ge j_0'}
		\sum_{\overline{\bm m} = \overline{\bm n}}
		\sum_{\ell \in \left( \underline{\bm n} \cap \underline{\bm m} \right) }
		\prod_{j \in \underline{\bm \mu (\ell)}} 
		e^{\iota w \ln^{\sigma}\lfloor j \rfloor}
		\left\| f^{(3)}_{\bm \mu (\ell)} \right\|_{r,h}
	\\  & \le  
		\sum_{j_0 \in \mathbb{N}}
		\sum_{j_0' \le j_0}
		e^{w \ln^{\sigma}\lfloor j_0' \rfloor}
		\sum_{\overline{\bm n} = j_0}
		\sum_{\overline{\bm m} = j_0'}
		\sum_{\ell \in \left( \underline{\bm n} \cap \underline{\bm m} \right) }
		\prod_{j \in \left( \underline{\bm n} \cup \underline{\bm m} \right) } 
		e^{\iota w \ln^{\sigma}\lfloor j \rfloor}
		\left\| f^{(3)}_{\bm \mu (\ell)} \right\|_{r,h}.
		\label{f31}
	\end{align}
	Then, arguing as in the estimate of $f^{(3)}$, we obtain
	\begin{align}
		\label{f3}
		\interleave f^{(3)} \interleave_{\iota,r,h}
		& \le 
		\left(\frac{1}{r_1h_2}+\frac{1}{r_2h_1}\right) 
		\interleave R\interleave_{\iota,r+r_1,h+h_1}
		\interleave F\interleave_{\iota,r+r_2,h+h_2}.
	\end{align}
	
	Combining \eqref{f1}, \eqref{f2} and \eqref{f3}, we complete the proof of the lemma.
	
\end{proof}

\subsection{Hamiltonian flow}

Let $\Phi_F:=X_F^t|_{t=1}$, where $X_F^t$ denotes the Hamiltonian flow generated by
$F$. Iterating Lemma~\ref{lempb} along the Lie series for $R\circ\Phi_F$ yields the
following estimate.

\begin{lem}[Hamiltonian flow]\label{Hamiltonian flow}
	Let $\sigma > 2$, $w>0$, $\iota \in (0,1)$ and $r,h,r_0,h_0>0$.
	Suppose
	\begin{equation}\label{Hf}
		\frac{24e}{r_0 h_0}\,
		\interleave F\interleave_{\iota,r+r_0,h+h_0}<\frac{1}{2}.
	\end{equation}
	Then
	$$
	\interleave R\circ\Phi_F\interleave_{\iota,r,h}
	\le
	\left(1 + \frac{48e}{r_0 h_0}\,
	\interleave F\interleave_{\iota,r+r_0,h+h_0}\right)
	\interleave R\interleave_{\iota,r+r_0,h+h_0}.
	$$
\end{lem}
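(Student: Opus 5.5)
We will expand $R\circ\Phi_F$ in its Lie series,
\begin{equation*}
R\circ\Phi_F=\sum_{n\ge 0}\frac{1}{n!}\,\mathrm{ad}_F^n R,\qquad \mathrm{ad}_F R:=\{R,F\},
\end{equation*}
and bound each term by applying Lemma~\ref{lempb} $n$ times on a nested sequence of domains. The convergence of this series in the weighted norm then also justifies the expansion: the partial sums are analytic, and the flow at time $1$ is well defined under \eqref{Hf}.

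\textbf{Step 1: the domains.} For a fixed $n\ge1$ we divide the strip of width $(r_0,h_0)$ into $n$ equal layers. We set
\begin{equation*}
\rho_i=r+(n-i)\tfrac{r_0}{n},\qquad \delta_i=h+(n-i)\tfrac{h_0}{n},\qquad i=0,\dots,n.
\end{equation*}
At step $i$ we apply Lemma~\ref{lempb} to $\{\mathrm{ad}_F^{i-1}R,F\}$ on $D(\rho_i,\delta_i)$. We take $(r_1,h_1)=(r_0/n,h_0/n)$ for the first argument, so that it is evaluated on $D(\rho_{i-1},\delta_{i-1})$. For $F$ we take $(r_2,h_2)=(\rho_0-\rho_i,\delta_0-\delta_i)\ge(r_0/n,h_0/n)$, so that $F$ is always evaluated on the largest domain $D(r+r_0,h+h_0)$. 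The weighted norm is monotone in $r,h$, so each application gives
\begin{equation*}
\interleave \mathrm{ad}_F^{i}R\interleave_{\iota,\rho_i,\delta_i}\le 3\cdot\frac{2n^2}{r_0h_0}\,\interleave F\interleave_{\iota,r+r_0,h+h_0}\,\interleave \mathrm{ad}_F^{i-1}R\interleave_{\iota,\rho_{i-1},\delta_{i-1}}.
\end{equation*}
Iterating from $i=1$ to $n$ yields
\begin{equation*}
\frac{1}{n!}\interleave \mathrm{ad}_F^nR\interleave_{\iota,r,h}\le \frac{n^{2n}}{n!}\Big(\frac{6}{r_0h_0}\interleave F\interleave_{\iota,r+r_0,h+h_0}\Big)^n\interleave R\interleave_{\iota,r+r_0,h+h_0}.
\end{equation*}

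\textbf{Step 2: the main obstacle.} The combinatorial factor $n^{2n}/n!\sim e^n n^n$ grows superexponentially, so the series does not close as it stands. The source of the loss is that both $r_1$ and $h_1$ shrink like $1/n$, which produces $n^2$ per step. To fix this we use $n!\ge(n/e)^n$ and refine the splitting so that only one factor of $n$ is lost per step. The Poisson bracket has the bilinear form $\partial_\theta\cdot\partial_J$, and in Lemma~\ref{lempb} the $J$-derivative loss $1/r_1$ and the $\theta$-derivative loss $1/h_2$ act on different factors. Accordingly, we take $(r_1,h_1)=(r_0/n,h_0/n)$ for the iterate $\mathrm{ad}_F^{i-1}R$, but let $F$ lose a fixed amount, e.g.\ $(r_2,h_2)\ge(r_0/2,h_0/2)$, measured against $F$'s own domain $D(r+r_0,h+h_0)$; for this we reserve half of the strip for $F$, and the constant remains compatible with the $24e$ and $48e$ in the statement. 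The per-step factor then becomes
\begin{equation*}
3\Big(\frac{2}{r_1h_0}+\frac{2}{r_0h_1}\Big)\le \frac{12\,n}{r_0h_0}\cdot(\text{const}).
\end{equation*}
With this bookkeeping the $n$-th term is at most $\frac{n^n}{n!}\big(\frac{12\cdot 2}{r_0h_0}\interleave F\interleave\big)^n\le\big(\frac{24e}{r_0h_0}\interleave F\interleave\big)^n$ times $\interleave R\interleave_{\iota,r+r_0,h+h_0}$. This is exactly the quantity controlled by \eqref{Hf}. Verifying that the half-strip reservation really gives $1/r_1h_2+1/r_2h_1\lesssim n/(r_0h_0)$ with constant $24$ is where I expect the bookkeeping to be delicate. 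A cruder constant would still suffice after adjusting \eqref{Hf}.

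\textbf{Step 3: summation.} Write $q:=\frac{24e}{r_0h_0}\interleave F\interleave_{\iota,r+r_0,h+h_0}<\frac12$. Summing the geometric series gives
\begin{equation*}
\interleave R\circ\Phi_F\interleave_{\iota,r,h}\le\Big(1+\sum_{n\ge1}q^n\Big)\interleave R\interleave_{\iota,r+r_0,h+h_0}=\Big(1+\frac{q}{1-q}\Big)\interleave R\interleave_{\iota,r+r_0,h+h_0}\le(1+2q)\interleave R\interleave_{\iota,r+r_0,h+h_0},
\end{equation*}
which is the claimed bound with constant $48e$.
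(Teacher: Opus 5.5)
Your refined scheme in Step 2 is essentially the paper's proof: expand in the Lie series, then apply Lemma~\ref{lempb} $n$ times. At each application the iterate loses a $1/n$-fraction of half the strip while $F$ always keeps a fixed half-strip margin; finally use $n^n/n!\le e^n$ and sum the geometric series with $q<\tfrac12$.

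The only point to tidy is that the iterate must lose $(r_0/(2n),h_0/(2n))$ per step, not $(r_0/n,h_0/n)$. Otherwise $F$, taken with margin $(r_0/2,h_0/2)$, would have to be evaluated outside $D(r+r_0,h+h_0)$ in the first steps near the outer boundary. With $r_1=r_0/(2n)$, $h_1=h_0/(2n)$, $r_2=r_0/2$, $h_2=h_0/2$, Lemma~\ref{lempb} gives exactly $3\big(\frac{1}{r_1h_2}+\frac{1}{r_2h_1}\big)=\frac{24n}{r_0h_0}$. This is your ``const $=2$'', so no delicate bookkeeping remains.
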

The proof of Lemma~\ref{Hamiltonian flow} is given in Appendix~\ref{A2}.

\subsection{Hamiltonian vector field}
For a Hamiltonian $F$, define the \emph{Hamiltonian vector field}
$$
X_F(J,\theta):=(-F_\theta,F_J),
\quad\text{with}\quad
F_{\theta} := \left\{ \frac{\partial F}{\partial \theta_{j}} \right\}_{j \in \mathbb{N}},
	\quad
F_{J} := \left\{ \frac{\partial F}{\partial J_{j}} \right\}_{j \in \mathbb{N}}.
$$
For $w>0$, set
\begin{align}\label{DefHV}
	\left\|  X_F \right\|_{w,\infty} 
	= 
	\left\|  F_{\theta} \right\|_{w} +
	\left\|  F_J \right\|_{\infty},
\end{align}
where
\begin{align*}
	\left\|  F_{\theta} \right\|_{w} = 
	\sum_{j \in \mathbb{N}} 
	\left| F_{{\theta}_{j}} \right| 
	e^{w \ln ^{\sigma} \lfloor j \rfloor}
	,\
	\left\|  F_{J} \right\|_{\infty} = 
	\sup_{j \in \mathbb{N}} 
	\left| F_{J_{j}} \right|.
\end{align*}

\begin{lem}[Hamiltonian vector field]\label{Hamiltonian vector field}
	Fix $\sigma > 2$, $w>0$ and $\iota \in (0,1)$.
	Let $F$ be a Hamiltonian on $D(r+r_0,h+h_0)$ with $r,h,r_0,h_0>0$. Then on $D(r,h)$,
	\begin{align*}
	\left\|  X_F \right\|_{w,\infty} 
	\le 
	r_0^{-1}
	\interleave F \interleave_{\iota,r+r_0,h}
	+
	h_0^{-1}
	\interleave F \interleave_{\iota,r,h+h_0}
	.
\end{align*}
\end{lem}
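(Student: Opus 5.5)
We estimate the two components of \eqref{DefHV} separately, working mode by mode in the expansion $F=\sum_{\bm m}F_{\bm m}$ of \eqref{F}, and then resum against the weighted norm \eqref{norm}. The idea is that the weight $e^{w\ln^\sigma\lfloor \overline{\bm m}\rfloor}$, attached to the maximal index of each mode, already dominates every weight $e^{w\ln^\sigma\lfloor \ell\rfloor}$ with $\ell\in\operatorname{supp}\bm m$. The remaining factors $e^{\iota w\ln^\sigma\lfloor j\rfloor}$ on $\underline{\bm m}$ are all $\ge1$, so they can only help.

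\textbf{The $\theta$-derivative.} We have $\partial_{\theta_\ell}F_{\bm m}=\mathrm i\widetilde k_\ell F_{\bm m}$, which is nonzero only when $\ell\in\operatorname{supp}\bm m$, so $\ell\le\overline{\bm m}$. On $D(r,h)$ we use $|J^{\widetilde{\bm\alpha}}|\le r^{|\widetilde{\bm\alpha}|}$ (since $|J_j|\le |J|_w<r$) and $|e^{\mathrm i\langle\widetilde{\bm k},\theta\rangle}|\le e^{|\widetilde{\bm k}|h}$. The powers of $I(0)$ are absorbed exactly as in the definition of $\|\cdot\|_{r,h}$. Summing over $\ell$ with the weight $e^{w\ln^\sigma\lfloor\ell\rfloor}\le e^{w\ln^\sigma\lfloor\overline{\bm m}\rfloor}$ gives
\begin{align*}
\sum_\ell e^{w\ln^\sigma\lfloor\ell\rfloor}|\partial_{\theta_\ell}F_{\bm m}|
\le e^{w\ln^\sigma\lfloor\overline{\bm m}\rfloor}\,|\widetilde{\bm k}|\,|F(\bm m)|\,r^{|\widetilde{\bm\alpha}|+|\widetilde{\bm\beta}|}e^{|\widetilde{\bm k}|h}.
\end{align*}
Next we use the elementary bound $|\widetilde{\bm k}|e^{-|\widetilde{\bm k}|h_0}\le (e h_0)^{-1}\le h_0^{-1}$. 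This converts the right-hand side into $h_0^{-1}e^{w\ln^\sigma\lfloor\overline{\bm m}\rfloor}\|F_{\bm m}\|_{r,h+h_0}$. Multiplying by the harmless factor $\prod_{j\in\underline{\bm m}}e^{\iota w\ln^\sigma\lfloor j\rfloor}\ge1$ and summing over $\bm m$ gives $\|F_\theta\|_w\le h_0^{-1}\interleave F\interleave_{\iota,r,h+h_0}$.

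\textbf{The $J$-derivative.} This is a Cauchy estimate in the single variable $J_\ell$, done at the level of monomials. We have $\partial_{J_\ell}F_{\bm m}=\widetilde\alpha_\ell F(\bm m)J^{\widetilde{\bm\alpha}-\bm e_\ell}\cdots$, so on $D(r,h)$
\begin{align*}
|\partial_{J_\ell}F_{\bm m}|\le \widetilde\alpha_\ell\, r^{-1}\,|F(\bm m)|\,r^{|\widetilde{\bm\alpha}|+|\widetilde{\bm\beta}|}e^{|\widetilde{\bm k}|h}.
\end{align*}
We then bound $\widetilde\alpha_\ell r^{-1}\le|\widetilde{\bm\alpha}|r^{-1}$ and use $|\widetilde{\bm\alpha}|r^{|\widetilde{\bm\alpha}|-1}\le r_0^{-1}(r+r_0)^{|\widetilde{\bm\alpha}|}$, which holds because $n r^{n-1}r_0\le (r+r_0)^n$. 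This yields, uniformly in $\ell$,
\begin{align*}
|\partial_{J_\ell}F_{\bm m}|\le r_0^{-1}\|F_{\bm m}\|_{r+r_0,h}.
\end{align*}
Taking the supremum over $\ell$ and summing over $\bm m$, with all weights in \eqref{norm} being $\ge1$, gives $\|F_J\|_\infty\le r_0^{-1}\interleave F\interleave_{\iota,r+r_0,h}$. Here we bound the supremum of a sum by the sum of the suprema.

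\textbf{Main obstacle.} The one delicate point is the power of $I(0)$ in the $J$-derivative. The norm $\|\cdot\|_{r,h}$ scales $I(0)^{\widetilde{\bm\beta}}$ by $r^{|\widetilde{\bm\beta}|}$, so we must not differentiate the $I(0)$ factors (they are constants). We must also check that the Cauchy gain above uses only the $\widetilde{\bm\alpha}$-degree, so that enlarging $r$ to $r+r_0$ in $\|\cdot\|_{r+r_0,h}$ only increases the $\widetilde{\bm\beta}$-part. Once this is verified, adding the two bounds gives the lemma.
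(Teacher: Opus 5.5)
Your proof is correct and follows essentially the paper's argument. You work mode by mode. For the $\theta$-part you use $|\widetilde{\bm k}|e^{-|\widetilde{\bm k}|h_0}\le h_0^{-1}$ together with $e^{w\ln^\sigma\lfloor\ell\rfloor}\le e^{w\ln^\sigma\lfloor\overline{\bm m}\rfloor}$, for the $J$-part a Cauchy-type estimate, and then you resum against \eqref{norm} using that the $\iota$-weights are $\ge 1$.

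The only minor difference is in the $J$-part. Your monomial-level bound via $n r^{n-1}r_0\le(r+r_0)^n$ (using only $|J_j|<r$) avoids the factor $e^{w\ln^\sigma\lfloor \ell\rfloor}$ that the paper's Cauchy estimate produces and then absorbs into the maximal-index weight, so your $F_J$ estimate does not need that weight at all. Your treatment of $I(0)^{\widetilde{\bm\beta}}$ through $r^{|\widetilde{\bm\beta}|}$ is the same implicit convention the paper uses.
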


The proof of Lemma~\ref{Hamiltonian vector field} is given in Appendix~\ref{A3}.

\section{KAM iteration}\label{KAM}

\subsection{Derivation of homological equations}
The proof of Theorem \ref{thm:main} employs a rapidly converging Newton-type iteration to handle
Kolmogorov’s small divisor problems, involving an infinite sequence of coordinate transformations.
At the $s$-th step of the scheme, a Hamiltonian
$H_{s} = N_{s} + R_{s}$
is considered as a small perturbation of some normal form $N_{s}$ with the form of 
\begin{equation*}
	N_s=\sum_{j \in \mathbb{N}}
	\Omega_{s,j}
	J_j.
\end{equation*}
A transformation $\Phi_{s}$ is constructed such that
\begin{align*}
	H_{s}\circ \Phi_{s} = N_{s+1} + R_{s+1}
\end{align*}
yielding a new normal form $N_{s+1}$ and a smaller perturbation $R_{s+1}$. For simplicity, we drop the index $s$ from $H_{s}, N_{s}, R_{s}, \Phi_{s}, \Omega_s$ and denote the index $s+1$ by the subscript $+$.

Decompose the perturbation $R$ into three parts:
\begin{equation}\label{052301}
	R(J,\theta)=R^{(0)}(J,\theta)+R^{(1)}(J,\theta)+R^{(2)}(J,\theta),
\end{equation}
where
\begin{eqnarray*}
	{R}^{(0)}(J,\theta)
	&=&
	\sum_{\substack{\bm n \in \mathbb{N}^{\mathbb{N}} \times \mathbb{N}^{\mathbb{N}} \times \mathbb{Z}^{\mathbb{N}} \\
			|\bm \alpha| = 0}}
	R(\bm n) I(0)^{\bm \beta} e^{\mathrm{i}  \langle \bm k, \theta \rangle},
	\\
	{R}^{(1)}(J,\theta)
	&=&
	\sum_{\substack{\bm n \in \mathbb{N}^{\mathbb{N}} \times \mathbb{N}^{\mathbb{N}} \times \mathbb{Z}^{\mathbb{N}} \\
			|\bm \alpha| = 1}}
	R(\bm n) J^{\bm \alpha} I(0)^{\bm \beta} e^{\mathrm{i}  \langle \bm k, \theta \rangle},
	\\
	{R}^{(2)}(J,\theta)
	&=&
	\sum_{\substack{\bm n \in \mathbb{N}^{\mathbb{N}} \times \mathbb{N}^{\mathbb{N}} \times \mathbb{Z}^{\mathbb{N}} \\
			|\bm \alpha| \ge 2}}
	R(\bm n) J^{\bm \alpha} I(0)^{\bm \beta} e^{\mathrm{i}  \langle \bm k, \theta \rangle}.
\end{eqnarray*}
We eliminate $R^{(0)}$ and $R^{(1)}$ by the coordinate transformation $\Phi$, which is obtained as the time-1 map $X_F^{t}|_{t=1}$ of a Hamiltonian vector field $X_F$ with $F=F^{(0)}+F^{(1)}$ and 
\begin{eqnarray}\label{052001}
	&&{F}^{(i)}
	=
	\sum_{\substack{\bm n \in \mathbb{N}^{\mathbb{N}} \times \mathbb{N}^{\mathbb{N}} \times \mathbb{Z}^{\mathbb{N}} \\
			|\bm \alpha| = i, \ \bm k \neq \bm 0}}
	F(\bm n)
	J^{\bm \alpha} I(0)^{\bm \beta}
	e^{\mathrm{i}  \langle \bm k, \theta \rangle},\quad i=0,1.
\end{eqnarray}
The Hamiltonian $F$ solves the homological equation
\begin{equation}\label{4.27}
	\{N,{F}\}+R^{(0)}+R^{(1)}=\left[R^{(0)}\right]+\left[R^{(1)}\right],
\end{equation}
where
\begin{equation*}\label{051502}
	\left[R^{(i)}\right]
	=
	\sum_{\substack{\bm n \in \mathbb{N}^{\mathbb{N}} \times \mathbb{N}^{\mathbb{N}} \times \mathbb{Z}^{\mathbb{N}} \\
			|\bm \alpha| = i, \ \bm k = \bm 0}}
	R(\bm n) J^{\bm \alpha} I(0)^{\bm \beta},\quad i=0,1.
\end{equation*}
Then, the coefficients of the solution $F$ for the homological equations (\ref{4.27}) are given by
\begin{equation}\label{051304}
	F(\bm n)
	=
	\frac{R(\bm n)}
	{\sum_{j\in\mathbb{N}}  k_{j} \Omega_{j}  },
\end{equation}
and the new Hamiltonian ${H}_{+}$ takes the form
\begin{eqnarray*}
	H_{+}\nonumber&=&H\circ\Phi\\
	&=&\nonumber N + \{N,F\}  + R^{(0)} + R^{(1)}
	\\
	&&\nonumber 
	+ N\circ X_F^1 - N - \{N,F\}
	+ R\circ X_F^1 - R^{(0)} - R^{(1)}\\
	&=&N_+  + R_+,
\end{eqnarray*}
where
\begin{equation}\label{051402}
	N_+=N+\left[R^{(0)}\right]+\left[R^{(1)}\right],
\end{equation}
and
\begin{equation}\label{051403}
	R_+ = 
	N\circ X_F^1 - N - \{N,F\}
	+ R\circ X_F^1 - R^{(0)} - R^{(1)}.
\end{equation}

\subsection{KAM Iteration}\label{031501}
We now provide the precise definition of the parameters for the $s$-th
iteration.
\begin{itemize}
	\item[]$\bullet$  $\epsilon_s=\epsilon^{\left(\frac{3}{2}\right)^s}$, 
	which dominates the size of	the perturbation,
	
	\item[]$\bullet$ $\delta_{s}=\frac{1}{(s+4)\ln^2(s+4)}$,
	
	\item[]$\bullet$ $\iota_{s+1}=\iota_s-3\delta_s \iota/2$, with $\iota_0 = \iota$,			
	
	\item[]$\bullet$ $r_{s+1}=r_s-3\delta_s r/2$, with $r_0 = r$,
	
	\item[]$\bullet$ $h_{s+1}=h_s-3\delta_s h/2$, with $h_0 = h$,
	
	\item[]$\bullet$ $ d_{s+1} = d_{s} +\frac{1}{\pi^2 s^2}$ with $d_0 = 0$,
	
	\item[]$\bullet$ $D_s:=D(r_s,h_s)$,
	
	\item[]$\bullet$ $\tau_s=\epsilon_s^{0.01}$,
	
	\item[]$\bullet$ $\eta_{s+1} = \frac{1}{20} \tau_s \eta_{s}$, \text{with} $\eta_{0} = \tau_0$. 
\end{itemize}
\begin{rem}
	Since
	\begin{align*}
		\sum_{s \ge 0} \delta_s \le \int_{e}^{+\infty} \frac{1}{s \ln^2 s} \ \mathrm{ d }  s =\frac{1}{3},
	\end{align*}
	one has
	\begin{align*}
		\iota_s \ge \iota/2, \quad
		 r_s \ge r/2, \quad 
		 h_s \ge h/2, \quad
		 \text{for all} \ s \ge 0
		 .
	\end{align*}
\end{rem}

For $\tau>0$, define the complex cube
	\begin{equation*}\label{M9}
		\mathcal{B}_{\tau}\left({\omega}^*\right)
		=
		\left\{\left(\omega_{j}\right)_{j\in\mathbb{N}}\in\mathbb{C}^{\mathbb{N}}:
		|\omega-{\omega}^*|_{\infty} \leq \tau\right\}.
	\end{equation*}
Then we have the following lemma.
\begin{lem}[Iterative Lemma]{\label{IL}}
	Suppose $H_{s}=N_{s}+R_{s}$ is real analytic on $D_{s}\times\mathcal{B}_{\eta_{s}}\left(\omega_{s}^*\right)$, where
	\begin{align*}
		N_{s}=\sum_{j \in \mathbb{N}}
		 \Omega_{s,j} 
		J_{j},
	\end{align*}
	the frequency $\Omega_s=(\Omega_{s,j})_{ j\in\mathbb{N}}$ satisfies
	\begin{eqnarray}
		\label{198}&& 
		\Omega_{s,j}\left(\omega_{s}^*\right)=\widehat{\omega}_{j},\\
		\label{199}&&\left|\left|\frac{\partial {\Omega}_s(\omega)}
		{{\partial \omega }}
		-
		\operatorname{Id}
		\right|\right|_{\ell^{\infty} \to \ell^{\infty}}
		<d_s\epsilon_{0}^{\frac{1}{10}}.
	\end{eqnarray}
	Assume that $ R_{s}=R^{(0)}_{s}+R^{(1)}_{s}+R^{(2)}_{s}$ satisfies 
	\begin{eqnarray}
		\label{200}&&
		\interleave R^{(0)}_{s}\interleave_{\iota_s,r_s,h_s} \leq \epsilon_{s}:=\epsilon_{0,s},\\
		\label{201}&&
		\interleave R^{(1)}_{s}\interleave_{\iota_s,r_s,h_s} \leq \epsilon_{s}^{0.6}:=\epsilon_{1,s},\\
		\label{202}&&
		\interleave R^{(2)}_{s}\interleave_{\iota_s,r_s,h_s} \leq (1+d_s)\epsilon_0:=\epsilon_{2,s}.
	\end{eqnarray}
Let $\widehat\omega\in\mathcal D$ satisfy Assumption~\ref{ass:diophantine}.
Then
for all $\omega\in\mathcal B_{\eta_s}(\omega_s^*)$ such that
$\Omega_s(\omega)\in\mathcal B_{\tau_s}(\widehat\omega)$, there exist
\(\omega_{s+1}^*\in\mathcal B_{\eta_s}(\omega_s^*)\) and a real-analytic symplectic map
$\Phi_{s+1}:D_{s+1}\to D_s$ satisfying
	\begin{eqnarray}
		\label{203}&&
		\left\|\Phi_{s+1}-\operatorname{id}\right\|_{w,\infty}\leq \epsilon_{s}^{0.5},\\
		\label{204}&&
		\left\|D\Phi_{s+1}-\operatorname{Id}\right\|_{(w,\infty)\rightarrow(w,\infty)}\leq \epsilon_{s}^{0.5},
	\end{eqnarray}
	such that 	$H_{s+1}=H_{s}\circ\Phi_{s+1}=N_{s+1}+R_{s+1}$ satisfies
	\eqref{198}-\eqref{202} with `$s+1$' in place of `$s$', where
	$\mathcal{B}_{\eta_{s+1}}\left(\omega_{s+1}^*\right)\subset \Omega_{s}^{-1}(\mathcal{B}_{\tau_s}(\widehat{\omega}))$
	and
	\begin{align}
		&|\Omega_{s+1}(\omega)-\Omega_s(\omega)|_{\infty} \le \epsilon_s^{0.58},
		\quad\label{206}\\
		&|\omega_{s+1}^*-\omega_s^*|_{\infty} \le \epsilon_s^{0.5}.\label{205}
	\end{align}
\end{lem}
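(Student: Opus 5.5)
The iterative step follows the standard Pöschel/Bourgain scheme. We solve the homological equation \eqref{4.27} with $F$ given by \eqref{051304}, estimate $F$ in the weighted norm, and then use Lemmas~\ref{lempb}, \ref{Hamiltonian flow} and \ref{Hamiltonian vector field} to control $R_+$ and $\Phi_{s+1}=X_F^1$.

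\textbf{Step 1: small divisors.} For $\omega\in\mathcal B_{\eta_s}(\omega_s^*)$ with $\Omega_s(\omega)\in\mathcal B_{\tau_s}(\widehat\omega)$ we write
\[
\langle \bm k,\Omega_s(\omega)\rangle=\langle \bm k,\widehat\omega\rangle+\langle \bm k,\Omega_s(\omega)-\widehat\omega\rangle .
\]
\emph{Case $|\bm k|_\infty$ large at the top index.} If $|k_{\overline{\bm k}}|$ is large compared with $\sum_{j\in\underline{\bm k}}|k_j|$, then boundedness of $\Omega_s\in[a/2,2b]$ gives $|\langle \bm k,\Omega_s\rangle|\ge 1$. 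This is the analogue of \eqref{ko}.

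\emph{Remaining case.} Here $|\bm k|\lesssim \sum_{j\in\underline{\bm k}}|k_j|$. We only need modes with $|\bm k|h_s\lesssim \ln(1/\epsilon_s)$, since higher modes can be truncated by the factor $e^{-|\bm k|\delta_s h}$ and moved into $R_+$ at cost $\epsilon_s^{3/2}$ with room to spare. For those modes Assumption~\ref{ass:diophantine} dominates the perturbation $|\bm k|\tau_s$, because $\tau_s=\epsilon_s^{0.01}$. Hence
\[
|\langle \bm k,\Omega_s\rangle|\ge \tfrac{\eta}{2}\prod_{j\in\underline{\bm k}}(1+k_j^2\lfloor j\rfloor^4)^{-C_0}.
\]

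\textbf{Step 2: bounding $F$.} The key point is that the Diophantine loss involves only $j\in\underline{\bm k}\subset\underline{\bm n}$, never $\overline{\bm n}$. We therefore absorb it into the weights:
\[
(1+k_j^2\lfloor j\rfloor^4)^{C_0}e^{-|k_j|\delta_s h}\,e^{-\delta_s\iota w\ln^\sigma\lfloor j\rfloor}\le \exp\bigl(C\,\delta_s^{-1/(\sigma-1)}\bigr),
\]
and the product of these terms over $j$ is uniformly bounded. This uses $\sigma>2$, which lets $\ln^\sigma\lfloor j\rfloor$ beat $4C_0\ln\lfloor j\rfloor$ up to a factor $e^{C\delta_s^{-1/(\sigma-1)}}$; following \cite{Con2024}, only finitely many $j$ contribute nontrivially. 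Since $e^{C\delta_s^{-c}}\le\epsilon_s^{-0.01}$ for $\epsilon$ small, this yields
\[
\interleave F^{(0)}\interleave_{\iota_s-\delta_s\iota,r_s,h_s-\delta_s h}\le \epsilon_s^{0.98},\qquad \interleave F^{(1)}\interleave\le \epsilon_s^{0.58}.
\]
Analyticity in $\omega$ on $\mathcal B_{\eta_s}$ follows as well, and Cauchy estimates on the shrinking cube give the derivative bound \eqref{199} with the increment $\frac{1}{\pi^2s^2}\epsilon_0^{1/10}$.

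\textbf{Step 3: new perturbation and transformation.} Lemma~\ref{Hamiltonian vector field} gives \eqref{203} and \eqref{204}, since $\epsilon_s^{0.58}\le\epsilon_s^{0.5}$ after Cauchy on a further $\delta_s$ shrink. The new normal form is $N_+=N+[R^{(0)}]+[R^{(1)}]$, where the constant term is dropped and $[R^{(1)}]$ shifts the frequency by $\le\epsilon_s^{0.6}$, which gives \eqref{206}. We bound $R_+$ from \eqref{051403} via Lie series, Lemmas~\ref{lempb} and \ref{Hamiltonian flow}, checking each order:
\begin{itemize}
\item $\{R^{(0)},F\}$, $\{R^{(1)},F^{(0)}\}$ and the quadratic terms in $F^{(0)}$ have size $\lesssim\epsilon_s^{1.56}$, which handles the $|\bm\alpha|=0$ part.
\item $\{R^{(1)},F^{(1)}\}$ and $\{R^{(2)},F^{(0)}\}$ are the source of $J$-linear terms; their size is $\lesssim \epsilon_s^{0.98}\epsilon_0+\epsilon_s^{1.16}$.
\end{itemize}
The delicate point is \eqref{201} at the next step: $\epsilon_{s+1}^{0.6}=\epsilon_s^{0.9}$. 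Terms $\{R^{(2)},F^{(0)}\}$ produce degree-one terms of size $\epsilon_s^{0.98}$, which is acceptable, and degree-zero terms come only from $R^{(0)}$-type products. Additions to $R^{(2)}$ are summable, bounded by $\epsilon_s^{0.5}$, which preserves \eqref{202} with $d_{s+1}$. The factors $\delta_s^{-2}$ from the bracket estimates are absorbed since $\delta_s^{-2}\ll\epsilon_s^{-0.001}$.

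\textbf{Step 4: frequency point.} Finally, $\omega_{s+1}^*$ is obtained by solving $\Omega_{s+1}(\omega)=\widehat\omega$ via the inverse function theorem in $\ell^\infty$, using \eqref{199} and \eqref{206}. This gives \eqref{205} and the inclusion $\mathcal B_{\eta_{s+1}}(\omega_{s+1}^*)\subset\Omega_s^{-1}(\mathcal B_{\tau_s}(\widehat\omega))$, since $\eta_{s+1}=\tau_s\eta_s/20$.

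\textbf{Main obstacle.} I expect Step 2 to be the hardest: absorbing $\prod_{j\in\underline{\bm k}}(1+k_j^2\lfloor j\rfloor^4)^{C_0}$ into the loss of $\iota$-weight and analyticity strip, uniformly in the number of sites, while keeping the loss at $\epsilon_s^{-0.01}$. I would first try the counting argument of \cite{Bour2005,Con2024}: sites with $\delta_s\iota w\ln^{\sigma-1}\lfloor j\rfloor\ge 4C_0+1$ cost nothing, and the remaining sites, of which there are roughly $e^{C\delta_s^{-1/(\sigma-1)}}$, are controlled by the $|\bm k|$ truncation.
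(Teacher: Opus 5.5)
\textbf{Main gap: the divisor lower bound on the parameter ball (Step~1).} You truncate only in $|\bm k|$ and then assert that Assumption~\ref{ass:diophantine} dominates $|\bm k|\tau_s$. That would give $|\langle \bm k,\Omega_s(\omega)\rangle|\ge \frac{\eta}{2}\prod_{j\in\underline{\bm k}}(1+k_j^2\lfloor j\rfloor^4)^{-C_0}$ on the whole parameter set. This is false, because the Diophantine bound has no floor in the spatial indices. For $\bm k=\bm e_{j_2}-\bm e_{j_1}$ with $j_1<j_2$ one has $|\bm k|=2$, yet the bound is $\eta(1+\lfloor j_1\rfloor^4)^{-C_0}\to 0$ as $j_1\to\infty$.

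Since $\widehat\omega$ takes infinitely many values in $[a,b]$, there are pairs $j_1<j_2$ with $|\widehat\omega_{j_2}-\widehat\omega_{j_1}|<\eta_s/2$. By \eqref{198}--\eqref{199}, $\Omega_s$ maps $\mathcal B_{\eta_s}(\omega_s^*)$ onto a set containing an $\ell^\infty$-ball of radius about $\eta_s/2$ around $\widehat\omega$. Hence $\Omega_{s,j_2}(\omega)-\Omega_{s,j_1}(\omega)$ vanishes somewhere in the parameter ball. Then $F$ is not analytic in $\omega$, and both the Cauchy estimate giving \eqref{199} at step $s+1$ and the inverse-function step for $\omega_{s+1}^*$ fail. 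Step~2 cannot repair this. Absorbing $(1+k_j^2\lfloor j\rfloor^4)^{C_0}$ into the loss of $\iota$-weight bounds $|F(\bm n)|$ only once a lower bound on the divisor is known; it cannot supply that bound.

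\textbf{What is missing, and a secondary problem in Step~2.} You need the spatial truncation in \eqref{B1+}. Besides $|\bm k|<B_sh^{-1}$ with $B_s=\delta_s^{-1}\ln\epsilon_s^{-1}$, you must also move into $R_+$ the monomials whose spatial weight $\sum_{j\in\underline{\bm n}}\ln^\sigma\lfloor j\rfloor$ exceeds a threshold of order $B_s$; the $\iota$-weight loss makes them small there. (Your threshold $|\bm k|h_s\lesssim\ln\epsilon_s^{-1}$ also drops the factor $\delta_s^{-1}$ needed for $e^{-|\bm k|\delta_s h}\le\epsilon_s^{3/2}$.) For the retained modes, splitting the sites at $N_s^*$ gives $\prod_{j\in\underline{\bm k}}(1+k_j^2\lfloor j\rfloor^4)\le\exp\{CB_s(\ln B_s)^{1-\sigma}\}=\epsilon_s^{-o(1)}$; this is where $\sigma>2$ enters. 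That is what lets the Diophantine bound dominate $|\bm k|\,|\Omega_s(\omega)-\widehat\omega|_\infty$. You still need the loss to be $\epsilon_s^{-c}$ with $c$ strictly below $0.01$, which this count allows. A lower bound of order $\epsilon_s^{0.01}$ does not beat $|\bm k|\tau_s$ when $|\bm k|$ can be of order $B_s/h$, so ``because $\tau_s=\epsilon_s^{0.01}$'' is not a valid reason as written.

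Separately, your product estimate in Step~2 is too optimistic. Up to $e^{x_*}$ sites, with $x_*\sim\delta_s^{-1/(\sigma-1)}$, can each cost $e^{Cx_*}$. The total is therefore of order $\exp(\exp(C\delta_s^{-1/(\sigma-1)}))$, not $\exp(C\delta_s^{-1/(\sigma-1)})$. This is still $\le\epsilon_s^{-0.01}$ for small $\epsilon$, because $\delta_s^{-1/(\sigma-1)}=o(s)$ when $\sigma>2$. That conclusion holds only if you count distinct small sites. Bounding their number through $|\bm k|\lesssim B_s$ instead yields a loss $\epsilon_s^{-C\delta_s^{-1-1/(\sigma-1)}}$, which is unusable.
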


\begin{proof}
	The proof is divided into four parts.
	
\textbf{Part 1: Truncation.}

The solution \eqref{051304} is truncated to monomials satisfying
\begin{equation}\label{B1+}
	\sum_{j \in \underline{ \bm n}} \ln^{\sigma} \lfloor j \rfloor 
	< 
	\frac{\ln \epsilon_{s}^{-1}}{ w \delta_s} 
	=: B_s w^{-1} 
	\quad \text{and} \quad
	|\bm k| 
	< 
	B_s h^{-1} 
	,
	\
	s\ge1,
\end{equation}
since otherwise the term is already small enough: during the step $s\to s+1$ there are saving factors 
\begin{align*}
	\prod_{j \in \underline{\bm n}} 
	e^{-\frac{3}{2}\delta_s w \ln^{\sigma}\lfloor j \rfloor}
	\quad \text{and} \quad
	e^{-\frac{3}{2}\delta_s h |k|}.
\end{align*}

For the retained monomials, the small-divisor bound \eqref{eq:Diophantine} is controlled as follows. 
Define
\begin{align*}
	N_s^* = \frac{B_s^{\frac{\sigma-1}{\sigma}}}{\ln^{\sigma} B_s}.
\end{align*}
From \eqref{B1+}, we can conclude
\begin{align*}
	\sum_{j \in \underline{\bm k} } \ln \lfloor j \rfloor
	& \le 
	\sum_{j \in \underline{\bm k} \atop j \le N_s^* } \ln \lfloor j \rfloor
	+
	\sum_{j \in \underline{\bm k} \atop j > N_s^* } 
	\ln^{\sigma} \lfloor j \rfloor
	\cdot
	\ln^{1-\sigma} \lfloor j \rfloor
	\\ & \le 3 w^{-1/\sigma} N_s^* B_s^{1/\sigma}
	+
	w^{-1} B_s \ln^{1-\sigma} N_s^*
	\\ & 
	\le C(\sigma,w) B_s \left( \ln B_s \right)^{1-\sigma},
\end{align*}
and
\begin{align*}
	\sum_{j \in \operatorname{supp} \bm k } \ln |k_j|
	& \le C(h) B_s^{1/2} \ln B_s.
\end{align*}
Thus, we can deduce that 
\begin{align*}
	\nonumber
	\prod_{j \in \underline{\bm k} }
	\left( 1 + k_j^2  \lfloor j \rfloor^4 \right)
	\le \exp \left\{ C(\sigma,w,h) B_s \left( \ln B_s \right)^{1-\sigma} \right\}
	\ll \epsilon_s^{-1},
\end{align*}
where $C(\sigma,w,h)$ is a constant depending on $\sigma$, $w$, $h$, and we have used the fact that $\sigma>2$.
Consequently, we have 
\begin{align}
	\label{epsilon-0.01}
	\eta^{-1} \prod_{j \in \underline{\bm k} }
	\left( 1 + k_j^2  \lfloor j \rfloor^4 \right)^{C_0(d)}
	\le \epsilon_s^{-0.01}. 
\end{align}

\textbf{Part 2. Estimate on the solutions of homological equation.}

We prove
\begin{align}
	\interleave F_s^{(0)}\interleave_{\iota_s,r_s,h_s} &\le \epsilon_s^{0.98},\label{022410.}\\
	\interleave F_s^{(1)}\interleave_{\iota_s,r_s,h_s} &\le \epsilon_s^{0.58}.\label{022410..}
\end{align}
	In view of \eqref{051304}, we have
	\begin{align*}
		\interleave F_s^{(0)}\interleave_{\iota_s,r_s,h_s}
		&= 
		\sum_{j_0\in\mathbb{N}} 
		e^{w\ln^\sigma\lfloor j_0 \rfloor}
		\sum_{\overline{\bm n} = j_0}
		\left(
		\prod_{j \in \underline{\bm n}} 
		e^{\iota w \ln^{\sigma}\lfloor j \rfloor}
		\right)
		\left\|
		\frac{R_{s,\bm n}^{(0)}}
		{\sum_{j}k_{j}\Omega_{s,j}}
		\right\|_{r_s,h_s}
		\\
		&\le \interleave R_s^{(0)}\interleave_{\iota_s,r_s,h_s}\,\epsilon_s^{-0.01}
		\qquad\text{(by \eqref{eq:Diophantine} and \eqref{epsilon-0.01})}\\
		&\le \epsilon_s^{0.98},
	\end{align*}
	where we used \eqref{200}. The proof of \eqref{022410..} is analogous.
	
	Moreover, combining \eqref{022410.}, \eqref{022410..} and Lemma \ref{Hamiltonian vector field},
	\begin{equation*}
		\|X_{F_s}\|_{w,\infty}\leq\epsilon_{s}^{0.57},
	\end{equation*}
which implies \eqref{203} and \eqref{204}.

\textbf{Part 3. Estimate on the remainder $R_{s+1}$.}

Recalling \eqref{051403}, the remainder $R_{s+1}$ is defined by
\[
R_{s+1} = N_s\circ X_{F_s}^1 - N_s - \{N_s,F_s\}
+ R_s\circ X_{F_s}^1 - R_s^{(0)} - R_s^{(1)}.
\]
Applying Lemma~\ref{lempb} and Lemma~\ref{Hamiltonian flow}, together with \eqref{200}-\eqref{202} and
\eqref{022410.}-\eqref{022410..}, 
one verifies that \eqref{200}-\eqref{202} hold at stage \(s+1\).

	\textbf{ Part 4. Estimate on the frequency shift.}

	Recall \eqref{051402}, we know that
	\begin{align*}
		N_{s+1}=N_s+\left[R_s^{(0)}\right]+\left[R_s^{(1)}\right].
	\end{align*}
	Since $\left[R_s^{(0)}\right]$ is a constant which does not affect the Hamiltonian vector field,
	we can denote by
	\begin{align*}
		\Omega_{s+1,j} = \Omega_{s,j} + 
		\sum_{
			\substack{
				\bm n \in 
				\mathbb{N}^{\mathbb{N}} \times
				\mathbb{N}^{\mathbb{N}} \times
				\mathbb{Z}^{\mathbb{N}} 
				\\
				\bm \alpha = \bm e_{j}, \ \bm k = \bm 0
			}
		}
		R_s(\bm n) I(0)^{\bm \beta},
	\end{align*}
	where $\bm e_{j}$ is the unit vector with the $j$-th element equals to $1$ and the others are $0$, the last term is the so-called frequency shift.

 $\bullet$	\emph{Proof of \eqref{206}.}
	On \(D_{s+1}\),
	\begin{align*}
		\sup_{j\in \mathbb{N}}
		\left|
		\sum_{
			\substack{
				\bm n \in 
				\mathbb{N}^{\mathbb{N}} \times
				\mathbb{N}^{\mathbb{N}} \times
				\mathbb{Z}^{\mathbb{N}} 
				\\
				\bm \alpha = \bm e_{j}, \ \bm k = \bm 0
			}
		}
		R_s(\bm n) I(0)^{\bm \beta}		
		\right|
			&  \le
		\sup_{j\in \mathbb{N}}  
		\sum_{\bm n \atop \operatorname{supp} \bm n \ni j} 
		\left\| 
		\frac{\partial R^{(1)}_{s,\bm n} }{\partial J_{j}} \right\|_{r_{s+1},0}
		\\
		& \le
		\frac{2}{3\delta_sr}
		\sum_{j_0 \in \mathbb{N}}
		e^{w\ln^{\sigma}\lfloor j_0 \rfloor} 
		\sum_{\overline{\bm n} = j_0}
		\left\| R^{(1)}_{s,\bm n} \right\|_{r_s,0}  
		\\
		& \quad (\text{by the Cauchy estimate})
		\\
		& \le 
		\frac{2}{3\delta_sr}
		\interleave R^{(1)}_{s}\interleave_{\iota_s,r_s,h_s}
		\le \epsilon_s^{0.58},
	\end{align*}
	where we have used \eqref{201}.
	Thus \eqref{206} holds.
	
	$\bullet$ \emph{Proof of \eqref{199} for \(s+1\).}
	By Cauchy's estimate on \(\mathcal B_{\tau_s\eta_s}(\omega_s^*)\),
	\begin{equation}\label{620}
		\sup_{j'}\sum_{j}
		\left|\frac{\partial\Omega_{s+1, j'}}{\partial\omega_{j}}
		-\frac{\partial\Omega_{s,j'}}{\partial\omega_{j}}\right|
		\le 
		\frac{|\Omega_{s+1}-\Omega_s|_{\infty}}{\tau_s\eta_s}
		\le 
		\epsilon_s^{0.55},
	\end{equation}
	where we used \eqref{206}.
	 On
	\(\mathcal B_{\frac{1}{10}\tau_s\eta_s}(\omega_s^*)\), combining \eqref{199} with
	\eqref{620} gives
	\begin{align*}
		&\quad 
		\sup_{j'}\sum_{j}
		\left| 
		\frac{\partial \Omega_{s+1,j'}}{\partial \omega_{j}} 
		-
		\delta_{j j'}
		\right|
		\\ & \le
		\sup_{j'}\sum_{j}
		\left| 
		\frac{\partial \Omega_{s+1,j'}}{\partial \omega_{j}} 
		-
		\frac{\partial \Omega_{s, j'}}{\partial \omega_{j}}
		\right|
		+
		\sup_{j'}\sum_{j}
		\left| 
		\frac{\partial \Omega_{s,j'}}{\partial \omega_{j}} 
		-
		\delta_{j j'}
		\right|
		\\ & \le 
		\epsilon_{s}^{0.55} + d_s \epsilon_0^{\frac{1}{10}}
		\\ & 
		< d_{s+1}\epsilon_0^{\frac{1}{10}},
	\end{align*}
	which verifies \eqref{199} for $s+1$.
	
$\bullet$	\emph{Proof of \eqref{198} for \(s+1\).}
	Consider the functional equation
	\begin{equation}\label{M21}
		\Omega_{s+1}(X) = \widehat\omega,
		\qquad X\in\mathcal B_{\frac{1}{10}\tau_s\eta_s}(\omega_s^*).
	\end{equation}
	By \eqref{199} and the inverse function theorem, \eqref{M21} has a solution
	\(\omega_{s+1}^*\).
	
$\bullet$ \emph{Proof of \eqref{205}.}
Write
\[
\omega_{s+1}^*-\omega_s^*
= (\mathrm{I}-\Omega_{s+1})(\omega_{s+1}^*)
- (\mathrm{I}-\Omega_{s+1})(\omega_s^*)
+ (\Omega_s-\Omega_{s+1})(\omega_s^*).
\]
By \eqref{206} and \eqref{620}, we have
\[
|\omega_{s+1}^*-\omega_s^*|_{\infty}\le\epsilon_s^{0.5}\ll\tau_s\eta_s,
\]
which completes the proof.
\end{proof}

\section{The proof of main result }\label{Pomain}
\begin{proof}[Proof of the Theorem \ref{thm:main}]
		Recall that the Hamiltonian $H$ takes the form of \eqref{H}.
	To apply Lemma~\ref{IL} with $s=0$, set
	$$
	H_0=H,\quad \omega_0^*=\widehat\omega,\quad \Omega_0=\operatorname{id},\quad \epsilon_0=\epsilon.
	$$
	Then \eqref{198}-\eqref{202} hold with $s=0$. By induction, Lemma~\ref{IL} yields a
	decreasing sequence of domains $D_s\times\mathcal B_{\eta_s}(\omega_s^*)$ and a
	sequence of transformations
	$$
	\Phi^s := \Phi_1\circ\cdots\circ\Phi_s
   $$
	such that $H\circ\Phi^s=N_s+R_s$ for $s\ge 1$, with
	\eqref{203}-\eqref{205} holding at each step.
	
	Consequently, $\omega_s^*$ converges to
	a limit $\omega^*$ with
	$$
	|\omega^*-\widehat\omega|_{\infty}\le\sum_{s=0}^\infty\epsilon_s^{0.5}<\epsilon^{0.4},
	$$
	and $\Phi^s$ converges uniformly on $D(r/2,h/2)\times\{\omega^*\}$ to $\Phi$ with
	$$
	\|\Phi-\operatorname{id}\|_{w,\infty}\le\epsilon^{0.4},
	\qquad
	\|D\Phi-\operatorname{Id}\|_{(w,\infty)\to(w,\infty)}\le\epsilon^{0.4}.
	$$
	Letting $s\to\infty$ completes the proof.
\end{proof}

\section{Measure estimate}\label{ME}

In this section we prove that the set of frequencies satisfying the Diophantine condition of Assumption~\ref{ass:diophantine} is uncountable.
The proof follows the strategy of \cite{DFP2026}, with the necessary adaptations to the present setting, where we handle a slight variant of the Diophantine condition originally introduced by Bourgain \cite{Bour2005}.

\subsection{Probability measure and main theorem}
Take any $0<d<1$ and denote by $\mathcal{C}(d) \subset [a,b]^\mathbb{N}$ the set of frequencies for which $\operatorname{dim}_{\text{box}} \left( \widetilde{\bm \omega} \right) = d$.
It is around these frequencies that we will construct
a suitable probability measure for which Diophantine vectors occupy a large measure.
To be specific, given any $\ell > 0$ and $\widetilde{\bm \omega} \in \mathcal{S}(d)$, we consider the set
\begin{align*}
	\Lambda \left( \widetilde{\bm \omega} \right)
	=
	\prod_{j \in \mathbb{N}} \Lambda_j
	=
	\prod_{j \in \mathbb{N}}
	\left[ \widetilde{\omega}_{j} - \frac{1}{j^{\frac{1}{d}}},
	\, \widetilde{\omega}_{j} + \frac{1}{j^{\frac{1}{d}}} \right]
	,
\end{align*}
endowed with the product topology.
The probability measure is defined by
\begin{align*}
	\operatorname{meas} \left( \prod_{j \in \mathbb{N}} \Lambda_j \right)
	=
	\prod_{j \in \mathbb{N}}
	\frac{\operatorname{Leb(\Lambda_j)}}{2 j^{-\frac{1}{d}}}
	.
\end{align*}
Recall that the Diophantine condition is given by \eqref{eq:Diophantine}. Define the resonant set
$$
\mathfrak R := 
\bigcup_{0\neq\bm k\in\mathbb Z^{\mathbb N}} 
\mathfrak R(\bm k),
$$
where
\begin{align}\label{Rk}
	\mathfrak R(\bm k) = \left\{\bm \omega \in \Lambda\left( \widetilde{\bm \omega} \right):
	|\langle \bm k,\bm \omega \rangle|
	<
	\eta
	\prod_{j \in \underline{\bm k} }
	\left(
	\frac{1}{ 1 + k_j^2 \lfloor j \rfloor^{4} }
	\right)^{ C_0(d) }
	\right\}
\end{align}
with $C_0(d) \ge \frac{4d^2 + 8d + 1}{4d(1-d)}$.
The main theorem of this section is as follows.
\begin{thm}\label{thm:measure}
	Let $\widetilde{\bm \omega} \in [a,b]^{\mathbb{N}}$ have box dimension $0<d<1$. For $\Lambda(\bm \omega)$ and $\mathfrak R$ defined above, we have
	\begin{align*}
		\operatorname{meas} \mathfrak R \le  C(a,b,d)  \eta^{1-d},
	\end{align*} 
	where $C(a,b,d)$ is a constant depending on $a$, $b$ and $d$.
\end{thm}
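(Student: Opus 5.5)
We bound $\operatorname{meas}\mathfrak R\le\sum_{\bm k\neq 0}\operatorname{meas}\mathfrak R(\bm k)$ and organise the sum by the maximal index $j_0=\overline{\bm k}$ and the value $k_{j_0}$. For fixed $\bm k$, we integrate first in the coordinate $\omega_{j_0}$, conditioning on all other coordinates. Since $\omega_{j_0}$ is uniformly distributed on an interval of length $2j_0^{-1/d}$ and $\langle\bm k,\bm\omega\rangle$ is affine in $\omega_{j_0}$ with slope $k_{j_0}$, we get
\begin{align*}
\operatorname{meas}\mathfrak R(\bm k)\le \min\Bigl\{1,\ \frac{\eta\, j_0^{1/d}}{|k_{j_0}|}\prod_{j\in\underline{\bm k}}(1+k_j^2\lfloor j\rfloor^4)^{-C_0(d)}\Bigr\}.
\end{align*}
The factor $j_0^{1/d}$ is the cost of the shrinking boxes. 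Summing this bound over all $j_0$ diverges, so the analysis splits into large $j_0$ and small $j_0$.

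\textbf{Large $k_{j_0}$ (the observation \eqref{ko}).} Since $\bm\omega\in[a,b]^{\mathbb N}$ up to $O(1)$ perturbations, we have
\begin{align*}
|\langle\bm k,\bm\omega\rangle|\ge a'|k_{j_0}|-b'\sum_{j\in\underline{\bm k}}|k_j|.
\end{align*}
If $|k_{j_0}|> C\sum_{j\in\underline{\bm k}}|k_j|$, then $|\langle\bm k,\bm\omega\rangle|\ge 1>\eta$, so $\mathfrak R(\bm k)=\emptyset$. This leaves only $|k_{j_0}|\lesssim|\underline{\bm k}|_1:=\sum_{j\in\underline{\bm k}}|k_j|$. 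Summing over these $k_{j_0}$ costs a factor $\lesssim\prod_{j\in\underline{\bm k}}(1+|k_j|)$, which the exponent $C_0(d)$ absorbs.

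\textbf{Large $j_0$ (box dimension, as in \eqref{j0}).} Next we use the box-dimension structure to make the effective set of $j_0$ finite. Let $N(\epsilon)\lesssim\epsilon^{-d-o(1)}$ be the covering number of $\{\widetilde\omega_j\}$ by $\epsilon$-intervals. Resonance forces
\begin{align*}
k_{j_0}\omega_{j_0}\in -\sum_{j\in\underline{\bm k}}k_j\omega_j+(-\eta,\eta),
\end{align*}
where the right-hand side is determined by the lower modes. Following \cite{DFP2026}, we would group the $j_0$ according to which $\epsilon$-interval (with $\epsilon\sim j_0^{-1/d}$) contains $\widetilde\omega_{j_0}$. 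Two choices of $j_0$ with $\widetilde\omega_{j_0}$ in far-apart intervals cannot both be resonant for the same lower data. Thus, for fixed lower modes, only $\lesssim 1$ interval of size $\eta$-neighbourhood, and hence only $\lesssim \eta^{-d}\cdot(\text{poly in }\underline{\bm k})$ indices $j_0$ at each dyadic scale, contribute. Concretely, for dyadic $j_0\sim 2^m$, I would bound the measure by
\begin{align*}
\min\bigl\{\eta 2^{m/d},\ \#\{j\sim 2^m:\ \widetilde\omega_j\text{ within }\eta+2^{-m/d}\text{ of a given point}\}\cdot\eta 2^{m/d}\bigr\}.
\end{align*}
Combining this with $N(\epsilon)\lesssim\epsilon^{-d'}$ for any $d'>d$, and with a union over lower modes, should give a summable geometric series whose total is $\eta^{1-d}$. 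That is the origin of the exponent $1-d$: it arises from the balance point $2^{m}\sim\eta^{-d}$.

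\textbf{Summing over lower modes.} Finally, we sum over $\underline{\bm k}$ using
\begin{align*}
\sum_{\underline{\bm k}}\prod_{j\in\underline{\bm k}}(1+|k_j|)^{A}\lfloor j\rfloor^{B}(1+k_j^2\lfloor j\rfloor^4)^{-C_0(d)}\le\prod_j\bigl(1+C\lfloor j\rfloor^{B-4C_0}\bigr)<\infty,
\end{align*}
where $A,B$ collect the polynomial losses from $j_0^{1/d}$, the $k_{j_0}$ count, and the covering counts. Convergence needs roughly $4C_0(d)-B>1$, with $B$ of order $\frac{1}{d(1-d)}$. I would check this against the stated threshold $C_0(d)\ge\frac{4d^2+8d+1}{4d(1-d)}$.

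The main obstacle is the large-$j_0$ step: making the box-dimension counting precise. In particular, one must control uniformly in the lower modes how many $j_0$ at scale $2^m$ can be resonant, so that the polynomial losses in $\lfloor j\rfloor$ fit exactly within the exponent $C_0(d)$ and the blow-up as $d\to1$ appears correctly.
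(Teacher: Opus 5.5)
\textbf{There is a genuine gap.} The large-$k_{j_0}$ reduction is right, but the two steps that carry the estimate are either wrong or missing.

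\textbf{The per-$\bm k$ bound uses the wrong coordinate.} Integrating in $\omega_{j_0}$ gives a true bound, but it picks up the largest possible loss $j_0^{1/d}$, and no later counting can repair this. For fixed $\bm k_*$, the indices $j_0\le\mathfrak J_{\bm k_*}\sim\eta^{-d}$ must be kept in any case. For those alone, $\sum_{j_0\le\mathfrak J_{\bm k_*}}\min\{1,\eta j_0^{1/d}\mathfrak B_{\bm k_*}^{C_0(d)}\}$ is already of order $\eta^{-d}$, not $\eta^{1-d}$. The paper instead integrates in a lower coordinate $\omega_j$ with $j\in\underline{\bm k}$; this set is nonempty because single-mode $\bm k$ are non-resonant. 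That gives Lemma \ref{Lem5.2}: $\operatorname{meas}\mathfrak R(\bm k)\le 2\,\overline{\bm k_*}^{1/d}\eta\mathfrak B_{\bm k_*}^{C_0(d)}$. This bound does not depend on $j_0$ or $k_{j_0}$, and its loss satisfies $\overline{\bm k_*}^{1/d}\le\mathfrak B_{\bm k_*}^{-1/(4d)}$, so the Diophantine weight absorbs it.

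\textbf{The box-dimension step is left as a heuristic, and the heuristic does not work.} Grouping $j_0$ at varying scales $j_0^{-1/d}$ and saying that far-apart $j_0$ cannot both be resonant for fixed lower data is a statement about individual $\bm\omega$. It does not bound the measure of the union, and your dyadic bound still carries the factor $2^{m/d}$.

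What is needed is the set inclusion of Lemma \ref{Lem5.1}. It uses two facts: the threshold in \eqref{Rk} depends only on $\bm k_*$, because the top index is omitted, and $|k_{j_0}|\le B_{\bm k_*}$ by \eqref{kj0}. The argument runs as follows.
\begin{enumerate}
\item Fix one scale $\mathfrak J_{\bm k_*}$ with $B_{\bm k_*}\mathfrak J_{\bm k_*}^{-1/d}\le\eta\mathfrak B_{\bm k_*}^{C_0(d)}$.
\item Cover $\{\widetilde\omega_j\}_{j\ge\mathfrak J_{\bm k_*}}$ by $\lesssim\mathfrak J_{\bm k_*}$ balls of radius $\sim\mathfrak J_{\bm k_*}^{-1/d}$ centred at points $\widetilde\omega_{j_0'}$, $j_0'\in\mathcal I_{\bm k_*}$.
\item Move the top mode $k_{j_0}$ from $j_0$ to a nearby $j_0'$. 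The error is $|k_{j_0}||\omega_{j_0}-\omega_{j_0'}|\le 4B_{\bm k_*}\mathfrak J_{\bm k_*}^{-1/d}$, so $\mathfrak R(\bm k)\subset\widehat{\mathfrak R}(\bm k')$.
\end{enumerate}
The $j_0'\in\mathcal I_{\bm k_*}$ are controlled only in number, not in size, which is a second reason the integration variable must be a lower one.

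Putting this together gives $\operatorname{meas}\mathfrak R\lesssim\sum_{\bm k_*}\mathfrak J_{\bm k_*}B_{\bm k_*}\overline{\bm k_*}^{1/d}\eta\mathfrak B_{\bm k_*}^{C_0(d)}\lesssim\eta^{1-d}\sum_{\bm k_*}B_{\bm k_*}^{1+d}\overline{\bm k_*}^{1/d}\mathfrak B_{\bm k_*}^{(1-d)C_0(d)}$. Using $B_{\bm k_*}\lesssim\mathfrak B_{\bm k_*}^{-1}$, the condition $(1-d)C_0(d)\ge 1+(1+d)+\frac{1}{4d}$ reduces this to $\sum_{\bm k_*}\mathfrak B_{\bm k_*}<\infty$. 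That condition is exactly the stated threshold $C_0(d)\ge\frac{4d^2+8d+1}{4d(1-d)}$. The factor $(1-d)$ in front of $C_0(d)$ comes from $\mathfrak J_{\bm k_*}\sim(\eta\mathfrak B_{\bm k_*}^{C_0(d)})^{-d}$. Your heuristic condition $4C_0-B>1$ (with your exponent $B$) does not contain it, so you never actually check the threshold.
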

Before proving this theorem, we first make some preparations.
To be specific, following the strategy in \cite{DFP2026}, we analyze the structure of the resonant set $\mathfrak R$ by using the definition of box dimension.

\subsection{Preliminary analysis}
First, we split $\bm k$  into $\bm k = (\bm k_*,k_{j_0})$, where $j_0 = \overline{\bm k}$, $k_{j_0}\in \mathbb{Z}$ and $\bm k_* \in \mathbb{Z}^{\mathbb{N}\setminus \{j_0\}}$.
Denote by $\Pi$ the projection $\bm k \mapsto \bm k_* $.
We can rewrite the union in the resonant set $\mathfrak{R}$ as
\begin{align}\label{Rj0}
	\mathfrak R = 
	\bigcup_{\bm k_* \in \mathbb{Z}^{\mathbb{N}}} 
	\bigcup_{j_0 > \overline{\bm k_*}}
	\bigcup_{\substack{\overline{\bm k}=j_0 \\ \Pi (\bm k) = \bm k_* } }
	\mathfrak R (\bm k).    	
\end{align}
Then we establish the following lemma 
to replace the infinite union over $j_0$ in \eqref{Rj0} by a finite one.
\begin{lem}\label{Lem5.1}
	Given $\bm k_* \in \mathbb{Z}^{\mathbb{N}}$, $0<d<1$, and we define
	\begin{align}\label{mfJ}
		\mathfrak{J}_{\bm k_*} = \min \left\{ j \in \mathbb{N}: B_{\bm k_*} j^{-\frac{1}{d}} \le 
		\eta
		\mathfrak{B}_{\bm k_*}^{ C_0(d) }  \right\},
	\end{align}
	where
	\begin{align*}
		B_{\bm k_*} =  \frac{1}{a} + \frac{b}{a} 
		\sum_{j \in \operatorname{supp} \bm k_*} \left|{k_*}_j\right|
		\quad \text{and} \quad
		\mathfrak{B}_{\bm k_*} = 
		\prod_{j \in \operatorname{supp} \bm k_* }
		\frac{1}{ 1 + k_j^2 \lfloor j \rfloor^{4} }.
	\end{align*}
	There exists a subset $\mathcal{I}_{\bm k_*} \subset \mathbb{N}$ with
	\begin{align*}
		\operatorname{card} \left(\mathcal{I}_{\bm k_*}\right) \lesssim 
		\mathfrak{J}_{\bm k_*},
	\end{align*}
	such that
	\begin{align}\label{j0}
		\bigcup_{j_0 > \overline{\bm k_*}}
		\bigcup_{\substack{\overline{\bm k}=j_0 \\ \Pi (\bm k) = \bm k_* } }
		\mathfrak R (\bm k)
		\subset
		\left( \bigcup_{j_0 \le \mathfrak{J}_{\bm k_*} }
		\bigcup_{\substack{\overline{\bm k}=j_0 \\ \Pi (\bm k) = \bm k_* } }
		\mathfrak R (\bm k) \right)
		\bigcup
		\left( \bigcup_{j_0 \in \mathcal{I}_{\bm k_*}}
		\bigcup_{\substack{\overline{\bm k}=j_0 \\ \Pi (\bm k) = \bm k_* } }
		\widehat{\mathfrak R }(\bm k) \right)
		,
	\end{align}
	where the set $\widehat{\mathfrak R }(\bm k)$ is defined as in \eqref{Rk} but with the right-hand side multiplied by $5$.
\end{lem}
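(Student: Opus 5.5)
The plan is to cut off the range of $j_0$ in two ways. Large values of $k_{j_0}$ cannot resonate, and indices $j_0$ beyond $\mathfrak J_{\bm k_*}$ can be grouped according to where $\widetilde\omega_{j_0}$ lies, using the box dimension of $\widetilde{\bm\omega}$. Throughout, fix $\bm k_*$ and write $\bm k=(\bm k_*,q)$ with $q=k_{j_0}\neq 0$ and $j_0=\overline{\bm k}>\overline{\bm k_*}$. Since $\underline{\bm k}=\operatorname{supp}\bm k_*$, the threshold in \eqref{Rk} is exactly $\eta\mathfrak B_{\bm k_*}^{C_0(d)}$. It is independent of $j_0$ and of $q$, and this is what makes the grouping possible.

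\textbf{Step 1: bounding $|q|$.} Since $\omega_j\in[a,b]$ (up to shrinking, as $\Lambda_j$ has width $2j^{-1/d}$), we have $|\langle\bm k,\bm\omega\rangle|\ge a|q|-b\sum_j|{k_*}_j|$. If $|q|>B_{\bm k_*}$ this is larger than $1$, which in turn exceeds $\eta\mathfrak B_{\bm k_*}^{C_0(d)}$ for $\eta<1$. Hence $\mathfrak R(\bm k)=\emptyset$ unless $|q|\le B_{\bm k_*}$. This is the observation \eqref{ko} alluded to in the introduction.

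\textbf{Step 2: freezing $\omega_{j_0}$ for $j_0>\mathfrak J_{\bm k_*}$.} For $\bm\omega\in\Lambda(\widetilde{\bm\omega})$ we have $|q(\omega_{j_0}-\widetilde\omega_{j_0})|\le B_{\bm k_*}j_0^{-1/d}\le\eta\mathfrak B_{\bm k_*}^{C_0(d)}$ by the definition \eqref{mfJ}. Therefore $\mathfrak R(\bm k)$ is contained in the set $\{|\langle\bm k_*,\bm\omega\rangle+q\widetilde\omega_{j_0}|<2\eta\mathfrak B_{\bm k_*}^{C_0(d)}\}$. This set depends on $j_0$ only through the number $\widetilde\omega_{j_0}$.

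\textbf{Step 3: grouping via box dimension.} Set $\varepsilon=\eta\mathfrak B_{\bm k_*}^{C_0(d)}/B_{\bm k_*}$. Since $\widetilde{\bm\omega}$ has box dimension $d$, the set $\{\widetilde\omega_j\}$ can be covered by $\lesssim\varepsilon^{-d}$ intervals of length $\varepsilon$. From each interval that meets $\{\widetilde\omega_j:j>\mathfrak J_{\bm k_*}\}$, pick one representative index $j'>\mathfrak J_{\bm k_*}$, and let $\mathcal I_{\bm k_*}$ be the set of these representatives. If $j_0>\mathfrak J_{\bm k_*}$ shares an interval with $j'$, then for $\bm k'=(\bm k_*,q)$ with $\overline{\bm k'}=j'$ we estimate
\[
|\langle\bm k',\bm\omega\rangle|\le|\langle\bm k_*,\bm\omega\rangle+q\widetilde\omega_{j_0}|+|q|\varepsilon+|q||\omega_{j'}-\widetilde\omega_{j'}|<(2+1+1)\eta\mathfrak B_{\bm k_*}^{C_0(d)}.
\]
This shows $\mathfrak R(\bm k)\subset\widehat{\mathfrak R}(\bm k')$, which gives \eqref{j0}. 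The indices $j_0\le\mathfrak J_{\bm k_*}$ are simply kept in the first union. For the count, minimality in \eqref{mfJ} gives $\mathfrak J_{\bm k_*}\approx(B_{\bm k_*}/(\eta\mathfrak B_{\bm k_*}^{C_0(d)}))^{d}=\varepsilon^{-d}$, up to rounding. Hence $\operatorname{card}\mathcal I_{\bm k_*}\lesssim\varepsilon^{-d}\lesssim\mathfrak J_{\bm k_*}$.

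\textbf{Main obstacle.} The delicate point is Step 3. The box-dimension covering bound $N(\varepsilon)\lesssim\varepsilon^{-d}$ must hold uniformly in $\varepsilon\in(0,1]$, whereas the definition only gives it asymptotically. I would handle this either by using the upper box dimension with a constant $C(\widetilde{\bm\omega})$ absorbed into $\lesssim$, valid for all $\varepsilon\le1$ because $N(\varepsilon)$ is finite for each fixed $\varepsilon$, or by replacing $d$ with a slightly larger exponent if needed. A second point to check is that the representative $j'$ lies in the range $j'>\mathfrak J_{\bm k_*}$. This is ensured by choosing the representatives only among such indices, and it is exactly what the last term in the displayed estimate uses.
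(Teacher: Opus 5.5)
Your proposal is correct and is essentially the paper's proof. Both arguments bound $|k_{j_0}|\le B_{\bm k_*}$ via \eqref{ko}, cover the tail $\{\widetilde\omega_j\}_{j\ge\mathfrak J_{\bm k_*}}$ at scale $\approx\mathfrak J_{\bm k_*}^{-1/d}$ with $\lesssim\mathfrak J_{\bm k_*}$ pieces, and then move $k_{j_0}$ to a representative index $j_0'$ while keeping $\bm k_*$. Your freezing at $\widetilde\omega_{j_0}$ and your scale $\varepsilon=\eta\mathfrak B_{\bm k_*}^{C_0(d)}/B_{\bm k_*}\ge\mathfrak J_{\bm k_*}^{-1/d}$ are only cosmetic variants.

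On your ``main obstacle'': the paper simply reads $\widetilde{\bm\omega}\in\mathcal C(d)$ as the uniform bound of fewer than $C\mathfrak J$ balls of size $\mathfrak J^{-1/d}$, which is exactly what you use. Note, however, that finiteness of $N(\varepsilon)$ at each scale would not by itself turn a $\limsup$ definition of box dimension into $N(\varepsilon)\lesssim\varepsilon^{-d}$. Likewise, enlarging $d$ would break the stated count $\operatorname{card}\mathcal I_{\bm k_*}\lesssim\mathfrak J_{\bm k_*}$.
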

\begin{proof}
	Since $\widetilde{\bm \omega} \in \mathcal{C}(d)$, there exists $C > 0$ such that $\{\widetilde{\omega}_{j}\}_{j \geqslant \mathfrak{J}_{\bm k_*}}$ can be covered with less than $C \mathfrak{J}_{\bm k_*}$ balls of size $\mathfrak{J}_{\bm k_*}^{-\frac{1}{d}}$. 
	This implies that we can find a subset $\mathcal{I}_{\bm k_*} \subset \mathbb{N} \cap \{j \geqslant \mathfrak{J}_{\bm k_*}\}$ of cardinality less than $C \mathfrak{J}_{\bm k_*}$ such that $\{\widetilde{\omega}_{j}\}_{j \geqslant \mathfrak{J}_{\bm k_*}}$ is covered with balls of size $2\mathfrak{J}_{\bm k_*}^{-\frac{1}{d}}$ centered at some element in $\{\widetilde{\omega}_{j}\}_{j \in \mathcal{I}_{\bm k_*}}$. 
	
	Then, for any $\bm \omega \in \mathfrak R (\bm k)$ with $\overline{\bm k} = j_0 > \max\left\{ \overline{\bm k_*}, \mathfrak{J}_{\bm k_*} \right\}$ and $\Pi (\bm k) = \bm k_*$,
	there exists $j_0' \in \mathcal{I}_{\bm k_*}$ 
	such that
	\begin{align}
		\nonumber
		|\omega_{j_0} - \omega_{j_0'}| 
		&\le
		|\widetilde{\omega}_{j_0} - \widetilde{\omega}_{j_0'}| 
		+ |\omega_{j_0} - \widetilde{\omega}_{j_0}| 
		+ |\omega_{j_0'} - \widetilde{\omega}_{j_0'}| 
		\\  \nonumber & 
		\le 2 \mathfrak{J}_{\bm k_*}^{-\frac{1}{d}}  
		+  j_0^{-\frac{1}{d}} +  j_0'^{-\frac{1}{d}}
		\\ & 
		\le 4 \mathfrak{J}_{\bm k_*}^{-\frac{1}{d}}.
		\label{omega'''}
	\end{align}
	Taking 
	\begin{align*}
		\bm k'=
		\begin{cases}
			k'_j = k_j, \quad j \le \overline{\bm k_*}, \\
			k'_j = 0, \quad   j > \overline{\bm k_*} \quad \text{and} \quad j \neq j_0',\\
			k'_{j_0'} = k_{j_0},
		\end{cases}
	\end{align*}
	we have
	\begin{align*}
		\overline{\bm k'}=j_0' \in \mathcal{I}_{\bm k_*} 
		\quad \text{and} \quad 
		\Pi\left(\bm k'\right) = \bm k_*.
	\end{align*}
	It  is worth noting that $\mathfrak{R}(\bm k) = \emptyset$ unless
	\begin{align}\label{kj0}
		|k_{j_0}| \le \frac{1}{a} + \frac{b}{a} \sum_{j \in \underline{\bm k}} |k_j|, 
		\quad \text{for} \quad j_0 = \overline{\bm k},
	\end{align}
	since 
	\begin{align}\label{ko}
		|\langle \bm k, \bm \omega \rangle| 
		\ge a |k_{j_0}| - b  \sum_{j \in \underline{\bm k}} |k_j|.
	\end{align}
	By using \eqref{Rk}, \eqref{omega'''} and \eqref{kj0}, we can conclude that for $\bm \omega$ given above,
	\begin{align*}
		|\langle \bm k', \bm \omega \rangle|
		&  
		\le |\langle \bm k, \bm \omega \rangle|
		+ \left| \omega_{j_0'} - \omega_{j_0} \right|  \left| k_{j_0}  \right|
		\\ & <
		\eta \mathfrak{B}_{\bm k_*}^{ C_0(d) }
		+
		4 \mathfrak{J}_{\bm k_*}^{-\frac{1}{d}} B_{\bm k_*}
		\\ &  <
		5 \eta \mathfrak{B}_{\bm k_*}^{ C_0(d) }
		,
		\quad
		\text{(by using \eqref{mfJ})}
	\end{align*}
	which implies $\bm \omega \in \widehat{\mathfrak R }(\bm k')$.
	Thus we finish the proof of this lemma.
\end{proof}

\subsection{Measure estimates}
Now we are in the position to prove Theorem \ref{thm:measure}.
\begin{proof}
	The proof is divided into two steps. 
	In the first step, we focus on $\operatorname{meas} \mathfrak R (\bm k)$.
	The estimate of $\operatorname{meas} \mathfrak R$ is completed in the second step, where Bourgain's Diophantine condition is employed to restrict the number of relevant indices $\bm k_*$.
	
	\noindent\textbf{Step 1:}
	In this step, we estimate $\operatorname{meas} \mathfrak R (\bm k)$.
	Specifically, we have the following estimate.
	\begin{lem}\label{Lem5.2}
		Fix $\bm k_* \in \mathbb{Z}^{\mathbb{N}}$.
		For $\bm k \in \mathbb{Z}^{\mathbb{N}}$ satisfying $\Pi(\bm k) = \bm k_*$, we have
		\begin{align*}
			\operatorname{meas} \mathfrak R (\bm k)
			\le 
			2 j_1^{\frac{1}{d}} 
			\eta \mathfrak{B}_{\bm k_*}^{ C_0(d) },
		\end{align*}
		where $j_1 = \overline{\bm k_*}$.
	\end{lem}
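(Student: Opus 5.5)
To bound $\operatorname{meas}\mathfrak R(\bm k)$, the plan is to freeze every coordinate except $\omega_{j_1}$, where $j_1=\overline{\bm k_*}$, and apply Fubini with respect to the product probability measure. By construction the coordinate $j_1$ carries the normalized Lebesgue measure $\frac{d\omega_{j_1}}{2j_1^{-1/d}}$ on $\Lambda_{j_1}$. This choice of coordinate explains the factor $2j_1^{1/d}$ in the statement. It also explains why the small-divisor threshold is $\eta\mathfrak B_{\bm k_*}^{C_0(d)}$: since $\overline{\bm k}=j_0>j_1$, we have $\underline{\bm k}=\operatorname{supp}\bm k_*$, so the right-hand side of \eqref{Rk} is exactly $\eta\mathfrak B_{\bm k_*}^{C_0(d)}$.

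\textbf{Step 1.} Observe that $j_1\in\operatorname{supp}\bm k_*$, so $k_{j_1}\neq 0$ and $|k_{j_1}|\ge 1$. (The degenerate case $\bm k_*=0$ should be treated separately. There $\bm k=k_{j_0}\bm e_{j_0}$ with $k_{j_0}\neq 0$, so $|\langle\bm k,\bm\omega\rangle|\ge a>\eta$ for small $\eta$, and $\mathfrak R(\bm k)=\emptyset$.)

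\textbf{Step 2.} Fix all $\omega_j$ with $j\neq j_1$. The function $\omega_{j_1}\mapsto\langle\bm k,\bm\omega\rangle=k_{j_1}\omega_{j_1}+c$ is affine with slope $k_{j_1}$. Hence the set of $\omega_{j_1}$ where $|\langle\bm k,\bm\omega\rangle|<\eta\mathfrak B_{\bm k_*}^{C_0(d)}$ is an interval of Lebesgue length at most
\[
\frac{2\eta\mathfrak B_{\bm k_*}^{C_0(d)}}{|k_{j_1}|}\le 2\eta\mathfrak B_{\bm k_*}^{C_0(d)}.
\]

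\textbf{Step 3.} Dividing by the normalization $2j_1^{-1/d}$ gives a conditional probability at most $j_1^{1/d}\eta\mathfrak B_{\bm k_*}^{C_0(d)}$. Integrating over the remaining coordinates, which carry a probability measure, then yields the claimed bound, even with room to spare by a factor $2$.

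The only real subtlety is measure-theoretic: justifying Fubini on the infinite product. Each $\mathfrak R(\bm k)$ depends on only finitely many coordinates, namely those in $\operatorname{supp}\bm k$, so it is a cylinder set. Fubini therefore reduces to a finite-dimensional statement on $\prod_{j\in\operatorname{supp}\bm k}\Lambda_j$, and no difficulty arises. I expect no genuine obstacle; the main point is choosing to integrate in $\omega_{j_1}$ rather than $\omega_{j_0}$. Integrating in $\omega_{j_0}$ would give a factor $j_0^{1/d}$, which is uncontrolled when summing over $j_0$. Using $j_1$, which is fixed by $\bm k_*$, is what makes the later summation over $j_0\le\mathfrak J_{\bm k_*}$ and $j_0\in\mathcal I_{\bm k_*}$ from Lemma~\ref{Lem5.1} feasible.
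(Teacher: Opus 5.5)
Your proposal is correct and is essentially the paper's argument. Like the paper, you discard the case $\#\operatorname{supp}\bm k=1$, then apply Fubini in a single coordinate $j\in\underline{\bm k}=\operatorname{supp}\bm k_*$ on which $\langle\bm k,\bm\omega\rangle$ has nonzero slope $k_j$. The paper takes an arbitrary such $j$ and then uses $j^{1/d}\le j_1^{1/d}$, whereas you take $j=j_1$ directly; your observation that this yields $j_1^{1/d}\eta\mathfrak B_{\bm k_*}^{C_0(d)}$, so the factor $2$ is slack, is also correct.
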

	\begin{proof}
		Since $0 \notin [a,b]$, we can conclude that for $\bm k \neq 0$ satisfying $\#(\operatorname{supp} \bm k) =1$, $\mathfrak R (\bm k) = \emptyset$ when $\eta$ is small enough.
		Thus we only need to consider the case $\#(\operatorname{supp} \bm k) \ge 2$, which implies that there exists $j \in \underline{\bm k}$ such that 
		\begin{align*}
			\frac{\partial }{\partial \omega_j} \langle \bm k, \bm\omega \rangle = k_j \neq 0.
		\end{align*}
		Then by using Fubini's theorem, we can conclude that
		\begin{align*}
			\operatorname{meas} \mathfrak R (\bm k) \le 
			2 j^{\frac{1}{d}} 
			\eta \mathfrak{B}_{\bm k_*}^{ C_0(d) }
			\le 
			2 j_1^{\frac{1}{d}} 
			\eta \mathfrak{B}_{\bm k_*}^{ C_0(d) },
		\end{align*}
		where $j_1 = \overline{\bm k_*}$.
	\end{proof}
	
	\noindent\textbf{Step 2:}
	Now we estimate $\operatorname{meas} \mathfrak{R}$.
	In view of \eqref{Rj0}, by using \eqref{kj0}, Lemma \ref{Lem5.1} and \ref{Lem5.2}, we can deduce that there exists a constant $C$ such that
	\begin{align*}
		\operatorname{meas} \mathfrak{R} 
		& \le 
		C
		\sum_{\bm k_* \in \mathbb{Z}^{\mathbb{N}} }
		\mathfrak{J}_{\bm k_*} 
		B_{\bm k_*}
		\eta \mathfrak{B}_{\bm k_*}^{ C_0(d) } 
		\\ & \le 
		C 
		\sum_{\bm k_* \in \mathbb{Z}^{\mathbb{N}} }
		B_{\bm k_*}^{1+d} 
		\eta^{1-d} \mathfrak{B}_{\bm k_*}^{(1-d)C_0(d)}
		j_0^{\frac{1}{d}} 
		\\ & \le 
		C(a,b,d)  \eta^{1-d}
		\sum_{\bm k_* \in \mathbb{Z}^{\mathbb{N}} }
		\mathfrak{B}_{\bm k_*}
		\\ & \le 
		C(a,b,d)  \eta^{1-d},
	\end{align*}
	where the last inequality is based on the fact
	\begin{align*}
		\sum_{\bm k_* \in \mathbb{Z}^{\mathbb{N}} }
		\mathfrak{B}_{\bm k_*} < \infty.
	\end{align*}
\end{proof}

\appendix 
\section{\hspace{-3mm} }\label{Ap}

\subsection{Proofs of Lemmas in Section~\ref{PB}}\label{app:proofs-section-2}

\subsubsection{Proof of Lemma \ref{Hamiltonian flow}}\label{A2}
\begin{proof}
	Expend $R \circ \Phi_F$ into Taylor series:
	\begin{align}\label{Hf1}
		R \circ \Phi_F = \sum_{n \ge 0} \frac{1}{n !} 
		R^{\langle n \rangle}, 
	\end{align}
	where $R^{\langle n +1 \rangle} = \left\{ R^{\langle n \rangle} , F \right\}, n \ge 0$ and $R^{\langle 0 \rangle}=R$.
	Note that, for $n \ge 1$, using Lemma \ref{lempb} $n$ times, we obtain
	\begin{align}
		\interleave R^{\langle n \rangle} \interleave_{\iota,r,h}
		& \le \nonumber
		\frac{24n}{r_0h_0}
		\interleave R^{\langle n -1 \rangle} \interleave_{\iota,r+\frac{r_0}{2n},h+\frac{h_0}{2n}}
		\interleave F \interleave_{\iota,r+\frac{r_0}{2},h+\frac{h_0}{2}}
		\\ \nonumber
		& \le 
		\interleave R^{\langle n -2 \rangle} \interleave_{\iota,r+\frac{r_0}{n},h+\frac{h_0}{n}}
		\left(
		\frac{24n}{r_0h_0}
		\interleave F \interleave_{\iota,r+\frac{n+1}{2n}r_0,h+\frac{n+1}{2n}h_0}
		\right)^2
		\\  \nonumber
		& \le  \cdots
		\\ 
		& \le  
		\interleave R \interleave_{\iota,r+r_0,h+h_0}
		\left(
		\frac{24n}{r_0h_0}
		\interleave F \interleave_{\iota,r+r_0,h+h_0}
		\right)^n.
		\label{Hf2}
	\end{align}
	Combining \eqref{Hf1} and \eqref{Hf2}, we have
	\begin{align*}
		\interleave R \circ \Phi_F \interleave_{\iota,r,h}
		& \le
		\interleave R \interleave_{\iota,r,h}
		+
		\sum_{n \ge 1} 
		\frac{1}{n!} \interleave R \interleave_{\iota,r+r_0,h+h_0}
		\left(
		\frac{24n}{r_0h_0}
		\interleave F \interleave_{\iota,r+r_0,h+h_0}
		\right)^n
		\\ & \le 
		\interleave R \interleave_{\iota,r,h}
		+
		\interleave R \interleave_{\iota,r+r_0,h+h_0}
		\sum_{n \ge 1} 
		\left(
		\frac{24e}{r_0h_0} 
		\interleave F \interleave_{\iota,r+r_0,h+h_0}
		\right)^n
		\\ & \quad (\text{ in view of } n^n < n! e^n)
		\\ & \le 
		\left( 1 +  \frac{48e}{r_0 h_0} 
		\interleave F \interleave_{\iota,r+r_0,h+h_0} \right)
		\interleave R \interleave_{\iota,r+r_0,h+h_0}
		,
	\end{align*}
	where the last inequality is based on \eqref{Hf}.
	Now we finish the proof of Lemma \ref{Hamiltonian flow}.
\end{proof}

\subsubsection{Proof of Lemma \ref{Hamiltonian vector field}}\label{A3}
\begin{proof}
	Given $j \in \mathbb{N}$, we write $F_{J_{j}}$ and $F_{\theta_{j}}$ as
	\begin{align*}
		F_{J_{\bm j}} 
		= \sum_{\bm n} F_{J_{j}, \bm n}
		= \sum_{\bm n}
		\frac{\partial }{\partial J_{j}}
		F_{\bm n}
	\end{align*}
	and
	\begin{align*}
		F_{\theta_{j}} 
		= \sum_{\bm n} 
		F_{\theta_{j}, \bm n}
		= \sum_{\bm n}
		\frac{\partial }{\partial \theta_{j}}
		F_{\bm n}.
	\end{align*}
	On the domain $D(r,h)$, we conclude that
	\begin{align}
		\left\|  F_J \right\|_{\infty} 
		= 
		\sup_{j \in \mathbb{N}} 
		\left| F_{J_{j}} \right| 
		&\le  \nonumber
		\sup_{j \in \mathbb{N}} 
		\sum_{\bm n   \atop  \operatorname{supp} \bm n \ni j} 
		\left\| F_{J_{j,\bm n}} \right\|_{r,h} 
	\\	& \le \nonumber
		\sup_{j \in \mathbb{N}} 
		\sum_{\bm n \atop  \operatorname{supp} \bm n \ni j }
		\left\| F_{\bm n} \right\|_{r+r_0,h} 
		{r_0}^{-1} 
		e^{w\ln^{\sigma}\lfloor j \rfloor}
		\\
		& \quad (\text{by the Cauchy estimate}) \nonumber
	\\	& \le \nonumber
		{r_0}^{-1}
		\sum_{j_0 \in \mathbb{N}} 
		e^{w\ln^{\sigma}\lfloor j_0 \rfloor}
		\sum_{\overline{\bm n} = j_0 }
		\left\| F_{\bm n} \right\|_{r+r_0,h} 
	\\	& \le 
		r_0^{-1}
		\interleave F \interleave_{\iota,r+r_0,h}.
		\label{FI}
	\end{align}
	Similarly, on the domain $D(r,h)$ we have
	\begin{align}
		\left\|  F_{\theta} \right\|_{w} 
		&= \nonumber
		\sum_{j \in \mathbb{N}} 
		\left| F_{\theta_{j}} \right|
		e^{w\ln^{\sigma}\lfloor j \rfloor}
	\\ & \le  \nonumber
		\sum_{j \in \mathbb{N}} 
		\sum_{\bm n \atop  \operatorname{supp} \bm n \ni j}
		|k_{j}|
		\| F_{\bm n} \|_{r,h}
		e^{w\ln^{\sigma}\lfloor j \rfloor}
	\\ & \le  \nonumber 
		\sum_{j_0 \in \mathbb{N}}
		e^{w\ln^{\sigma}\lfloor j_0 \rfloor}
		\sum_{\overline{\bm n} = j_0 }
		\sum_{j \in \mathbb{N}} 
		|k_{j}|
		\| F_{\bm n} \|_{r,h}
	\\	& \le 
		\frac{1}{h_0}
		\interleave F \interleave_{\iota,r,h+h_0},
		\label{Ft}
	\end{align}
	where we have used the fact
	\begin{align*}
		|\bm k| = \sum_{\bm j \in \mathbb{N}} |k_{\bm j}|.
	\end{align*}
In view of Definition~\eqref{DefHV}, the estimates \eqref{FI} and \eqref{Ft} complete the proof of Lemma~\ref{Hamiltonian vector field}.
\end{proof}

\section*{Acknowledgments}

This work is supported by NNSFC (No.12571195, No.12371241) and Natural Science Foundation of Liaoning Province (No.2025-MS-002).

\section*{Data Availability}
The manuscript has no associated data.

\section*{Declarations}
{\bf Conflicts of interest} \ The authors  state  that there is no conflict of interest.

\end{document}